\tikzset{>=latex}
\renewcommand*{\backref}[1]{}
\renewcommand*{\backrefalt}[4]{%
  \ifcase #1%
  \or [Page~#2.]%
  \else [Pages~#2.]%
  \fi%
}
\theoremstyle{plain}
\newtheorem{lemma}{Lemma}
\newtheorem{proposition}[lemma]{Proposition}
\theoremstyle{definition}
\newtheorem{definition}[lemma]{Definition}
\newtheorem*{remark}{Remark}
\newcommand{\g}{\mathfrak{g}}
\newcommand{\h}{\mathfrak{h}}
\newcommand{\db}{\partial b}
\newcommand{\dc}{\partial c}
\newcommand{\ddb}{\partial^2 b}
\newcommand{\ddc}{\partial^2 c}
\newcommand{\so}{\mathfrak{so}}
\newcommand{\w}{\mathfrak{w}}
\newcommand{\Der}{\operatorname{Der}}
\newcommand{\reg}{\operatorname{reg}}
\renewcommand{\div}{\operatorname{div}}
\newcommand{\JJ}{\mathbb{J}}
\newcommand{\GG}{\mathbb{G}}
\newcommand{\TT}{\mathbb{T}}
\newcommand{\RR}{\mathbb{R}}
\newcommand{\NN}{\mathbb{N}}
\newcommand{\ZZ}{\mathbb{Z}}
\newcommand{\CC}{\mathbb{C}}
\newcommand{\1}{\mathbb{1}}
\newcommand{\Hom}{\operatorname{Hom}}
\newcommand{\dotr}{\mbox{$\boldsymbol{\cdot}$}}
\definecolor{dkgr}{rgb}{0,0.6,0}
\definecolor{gris}{rgb}{0.5,0.5,0.5}
\numberwithin{equation}{section}
\def\vx{\ensuremath{\vec x}}
\def\vq{\ensuremath{\vec q}}
\def\vk{\ensuremath{\vec k}}
\newcommand{\dk}{d\widetilde{k}\,}
\def\d{\ensuremath{\text{d}}}
\newcommand{\Z}{\mathbb Z}
\begin{document}

\title{BMS-like algebras: canonical realisations and BRST quantisation}

\author[Batlle]{Carles Batlle}
\author[Figueroa-O'Farrill]{José M Figueroa-O'Farrill}
\author[Gomis]{Joaquim Gomis}
\author[Vishwa]{Girish S Vishwa}

\address[JMF,GSV]{Maxwell Institute and School of Mathematics, The University
  of Edinburgh, James Clerk Maxwell Building, Peter Guthrie Tait Road,
  Edinburgh EH9 3FD, Scotland, United Kingdom}
\address[CB]{Institut d'Organització i Control i Departament de
  Matemàtiques, Universitat Politècnica de Catalunya, EPSEVG,
  Av. V. Balaguer 1,  Vilanova i la Geltrú, 08800 Spain}
\address[JG]{Departament de Física Quàntica i Astrofísica and Institut
  de Ciències del Cosmos (ICCUB), Universitat de Barcelona, Martí i
  Franquès 1, Barcelona, 08028 Spain}

\email[JMF]{\href{mailto:j.m.figueroa@ed.ac.uk}{j.m.figueroa@ed.ac.uk}, ORCID: \href{https://orcid.org/0000-0002-9308-9360}{0000-0002-9308-9360}}
\email[GSV]{\href{mailto:}{G.S.Vishwa@sms.ed.ac.uk}, ORCID: \href{https://orcid.org/0000-0001-5867-7207}{0000-0001-5867-7207}}
\email[CB]{\href{mailto:}{carles.batlle@upc.edu}, ORCID: \href{https://orcid.org/0000-0002-6088-6187}{0000-0002-6088-6187}}
\email[JG]{\href{mailto:}{joaquim.gomis@ub.edu}, ORCID: \href{https://orcid.org/0000-0002-8706-2989}{0000-0002-8706-2989}}

\begin{abstract}
  We generalise BMS algebras in three dimensions by the introduction
  of an arbitrary real parameter $\lambda$, recovering the standard
  algebras (BMS, extended BMS and Weyl-BMS) for $\lambda=-1$.  We
  exhibit a realisation of the (centreless) Weyl $\lambda$-BMS algebra
  in terms of the symplectic structure on the space of solutions of
  the massless Klein-Gordon equation in $2+1$, using the eigenstates
  of the spacetime momentum operator. The quadratic Casimir of the
  Lorentz algebra plays an essential rôle in the construction. The
  Weyl $\lambda$-BMS algebra admits a three-parameter family of
  central extensions, resulting in the (centrally extended) Weyl-BMS
  algebra, which we reformulate in terms of operator product
  expansions.  We construct the BRST complex of a putative Weyl-BMS
  string and show that the BRST cohomology is isomorphic to the chiral
  ring of a topologically twisted $N=2$ superconformal field theory.
  We also comment on the obstructions to obtaining a conformal BMS Lie
  algebra -- that is, one that includes in addition the
  special-conformal generators -- and the need to consider a
  W-algebra.  We then construct the quantum version of this W-algebra
  in terms of operator product expansions.  We show that this
  W-algebra does not admit a BRST complex.
\end{abstract}

\maketitle
\tableofcontents

\section{Introduction and summary of the results}
\label{sec:introduction}

The BMS algebra was introduced in \cite{Bondi:1962px,Sachs:1962zza} as
the asymptotic symmetry algebra of a four-dimensional flat spacetime
at null infinity.  The BMS algebra extends the Poincaré algebra by an
infinite number of ``super-translations''.   Longhi and Materassi
\cite{Longhi:1997zt} found a canonical realisation of this algebra in
terms of the natural symplectic structure of the Fourier modes of a
free Klein--Gordon (KG) field in Minkowski spacetime.  A comprehensive
presentation of recent applications of BMS symmetries can be found in
\cite{Strominger:2017zoo}.

The BMS algebra was later extended to include an infinite number of
``super-rotations''
\cite{Barnich:2009se,Barnich:2010ojg,Barnich:2011mi} (see also
\cite{Fiorucci:2024ndw} for more recent developments).  There also
exists a canonical realisation of this extended BMS algebra in terms
of a free massless KG field in three-dimensional Minkowski spacetime
\cite{Batlle:2017llu}, expressed in terms of plane waves.  The
massless KG equation is conformally invariant, so it is natural to ask
whether there is a way to further extend the algebra by adding
(super-)dilatations and (super-)special-conformal transformations.
The answer for dilatations is positive and results in the Weyl--BMS algebra
\cite{Donnay:2020fof,Batlle:2020hia,Fuentealba:2020zkf,Freidel:2021fxf},
which further extends the extended BMS algebra by
superdilatations. The answer for special-conformal transformations
seems to be negative; although there is a W-algebra which may be
argued to extend the Weyl--BMS algebra \cite{Fuentealba:2020zkf}.

One of the main aims of this paper is to study a generalisation of
these algebras, in the case of $2+1$ dimensions, by the introduction
of an arbitrary real parameter $\lambda$, recovering the standard
algebras (BMS, extended BMS and Weyl-BMS) for $\lambda=-1$.  We call
them (Weyl) $\lambda$-BMS algebras, but they should not be confused
with the cosmological $\Lambda$-BMS algebras discussed, for instance,
in \cite{Compere:2019bua}.  Just like the BMS algebra can be described
intrinsically as the Lie algebra of infinitesimal automorphisms of the
conformal carrollian structure at null infinity in Minkowski spacetime
\cite{Duval:2014uva} (see also \cite{Figueroa-OFarrill:2019sex} for
other BMS-like algebras with a similar interpretation as infinitesimal
conformal automorphisms of other carrollian structures), the
$\lambda$-BMS algebras have been shown to appear as the near-horizon
algebra of non-extremal black hole horizons \cite{Grumiller:2019fmp}
(their spin $s$ and our $\lambda$ are related by a sign: $s =
-\lambda$.)

The paper is divided into three main sections.  In
Section~\ref{sec:cano-real} we will exhibit a realisation of the
(centreless) Weyl $\lambda$-BMS algebra in terms of the symplectic
structure on the space of solutions of the three-dimensional massless
Klein--Gordon equation.  Deriving inspiration from the case of the BMS
algebra (corresponding to $\lambda=-1$), we consider in
Section~\ref{sec:super-trans} the semidirect product of the Lorentz
Lie algebra with the super-translations.  The super-translations form
an infinite-dimensional abelian Lie algebra spanned by the
eigenfunctions of eigenvalue $\lambda(\lambda -1)$ of the quadratic
Casimir $C_2$ of the Lorentz Lie algebra, thought of as a second-order
differential operator on the smooth functions on the lightcone (i.e.,
the massless mass shell).

The Lorentz Lie algebra is realised as vector fields on the lightcone
and naturally act on the super-translations via the Lie derivative.  In
Section~\ref{sec:super-rots} we observe that not only the Lorentz Lie
algebra, but indeed any vector field on the lightcone which commutes
with $C_2$ also acts on the super-translations.  The additional
such vector fields are the super-rotations and they form a Lie algebra
isomorphic to the Witt algebra (i.e., the centreless Virasoro
algebra).  Together with the super-translations, one obtains the
(extended) $\lambda$-BMS algebra.  In
Section~\ref{sec:quadr-casim-lambda-BMS} we determine the quadratic
Casimirs of the $\lambda$-BMS algebra, which might be a result of
independent interest.

The massless Klein--Gordon field is not only Poincaré invariant, but
actually conformally invariant. In particular it is invariant under
dilatations, whose generator acts on functions as a differential
operator of degree at most one, of which the super-translations form
an eigenspace. We then ask whether there are other differential
operators of degree at most one which commute with $C_2$ and which act
on super-translations with the same eigenvalue as the dilatation. The
answer is positive and we obtain in this way a family of such
operators $D_n(k_1,k_2,k_3)$ depending on an integer parameter ($n$)
and three real parameters $(k_i)$.  The real parameters are fixed by
demanding that for $n=0$ we should recover the dilatation in the
conformal algebra.  This results in generators $D_n$, which we call
superdilatations. The resulting Lie algebra is the (centreless) Weyl
$\lambda$-BMS Lie algebra.

A natural question is whether one can do the same with the
special-conformal generators and extend them to a ``conformal''
$\lambda$-BMS algebra.  We argue in Section~\ref{sec:superconf} that
no such extension exists as a Lie algebra.  It is known, however, that
there is an extension as a W-algebra \cite{Fuentealba:2020zkf}, which
is discussed in Section~\ref{sec:conformal-bms-w}.

In Section~\ref{sec:weyl-lambda-bms} we show that the Weyl
$\lambda$-BMS algebra admits a three-parameter family of central
extensions, resulting in the (centrally extended) Weyl--BMS algebra.
We then reformulate the centrally extended algebra in terms of
operator product expansions.  In that language we then proceed to
construct the BRST complex of putative Weyl--BMS strings, showing that
that it exists provided the central charges are chosen judiciously.
We then show that the BRST cohomology is isomorphic to the chiral ring
of a topologically twisted $N=2$ superconformal theory obtained by
coupling the Weyl--BMS string to topological gravity in the form of a
Koszul topological conformal field theory.  This provides further
evidence for a conjecture in
\cite{Figueroa-OFarrill:1995agp,Figueroa-OFarrill:1996dic} that the 
BRST cohomology of every topological conformal field theory is
isomorphic to the chiral ring of an $N=2$ superconformal field theory.

In Section~\ref{sec:conformal-bms-w} we return to the case of $\lambda
= -1$ and we first of all construct the fully quantum conformal BMS
W-algebra of \cite{Fuentealba:2020zkf} in terms of operator product
expansions and then report on calculations showing that, perhaps
contrary to expectations, there is no BRST complex for this W-algebra.

The paper ends with a short Section~\ref{sec:conclusions-outlook} with
conclusions and a look at future extensions of this work and two
appendices with technical results used in Section~\ref{sec:cano-real}.

\section{Canonical realisations}
\label{sec:cano-real}

In this section we will consider a canonical realisation of a
generalisation of the Weyl-BMS algebra that depends on one
parameter. We will see the crucial role of the Casimir of the Lorentz
group. We will start by including only the super-translations, and
their $\lambda$-depending generalisations, and then will proceed to
include the super-rotations and the super-dilatations.

The Lagrangian density for a real massless scalar field in flat
Minkowski space time\footnote{The signature of the Minkowski metric is
  $(-++)$} is
\begin{equation}
  \mathcal{L} = -\dfrac{1}{2}\partial_\mu \phi \partial^\mu \phi.
\end{equation}
The solution to the equation of motion, Klein--Gordon equation, in terms of the Fourier modes $a(\vk)$,
\begin{equation}
  \phi(t,\vx) = \int\dk \left(a(\vk)e^{ikx}+ \bar{a}(\vk)e^{-ikx}\right),
\end{equation}
with $x=(t,\vx)$, $kx=-\omega t+ \vk\cdot\vx$,
$\omega=k^0=\sqrt{\vk^2}$.   The solution is expressed in terms of
the plane waves that are eigenstates of the momentum operator and
\begin{equation}
  \dk = \frac{\d^2 k}{\Omega(\vk)},\quad \Omega(\vk) = (2\pi)^2 2 \omega,
\end{equation}
and where the Fourier modes satisfy the non-zero Poisson brackets
\begin{equation}
  \{a(\vk), \bar{a}(\vq)  \} = -i \Omega(\vk)\delta^2(\vk-\vq).
\end{equation}
Notice that we parameterise the mass-shell manifold of the massless
scalar-field, $k^2=0$, using $\vec{k}\in\RR^2$.  Alternatively, we
could expand the solution in terms of other eigenfunctions, for
example using the eigenfunctions of a boost generator, which would
lead in a natural way to celestial holography
(see, e.g., \cite{Pasterski:2021rjz,Donnay:2023mrd}) expressing
the four dimensional theory in terms of two-dimensional complex
conformal field theory. In our case, in three dimensions, a chiral
two-dimensional conformal field theory seems to appear naturally.

The conserved charges associated to the translation and Lorentz
generators are
\begin{align}
  P^\mu &= \int\dk\bar{a}(\vk) k^\mu a(\vk),\quad \mu=0,1,2, \label{sPmu}\\
  M^{ij} &= -i \int\dk \bar{a}(\vk)\left(k^i\frac{\partial}{\partial k^j}-k^j \frac{\partial}{\partial k^i}  \right) a(\vk),\quad j=1,2\label{sMij}\\
  M^{0j} &= tP^j - i \int\dk \bar{a}(\vk) k^0 \frac{\partial}{\partial k^j} a(\vk), \quad j=1,2, \label{sM0j}
\end{align}
while the charge associated to dilatations is
\begin{equation}
  D = -tP^0 + i \int\dk\, \bar{a}(\vk) \left( k^j \frac{\partial}{\partial k^j} + \dfrac{1}{2} \right) a(\vk).
\end{equation}

From the above equations one can read, at $t=0$,
the differential operators for boosts, rotation and dilatation:   
\begin{align}
  \hat{B}_j &= i \omega \frac{\partial}{\partial k^j},\ j=1,2,\\
  \hat{J} &= -i \left(k^1\frac{\partial}{\partial k^2}-k^2 \frac{\partial}{\partial k^1}  \right),\\
  \hat{D} &= i\left( k^j\frac{\partial}{\partial k^j}+ \frac{1}{2}\right),
\end{align}
with $B_j=M_{0j}$. As shown in Appendix \ref{AppCharges}, the algebra
of the Poisson brackets of the charges induced by the symplectic
structure of the Fourier coefficients is the same, up to a $-i$
factor, of that of the associated differential operators, and we will
work with the later.

Using polar coordinates in the $(k^1,k^2)$ plane, $k^1=r\cos\phi$, $k^2=r\sin\phi$, $\omega=r$, one has
\begin{align}
	\hat{B}_1 &= i r \cos\phi\ \partial_r - i \sin\phi\ \partial_\phi,\\
	\hat{B}_2 &= i r \sin\phi\ \partial_r + i \cos\phi\ \partial_\phi,\\
	\hat{J} &= -i \partial_\phi,\\
	\hat{D} &=i\left(r\partial_r+\frac{1}{2}\right).\label{opD}
\end{align} 
In terms of these operators, the quadratic Casimir of the Lorentz
group in $2+1$, $\frac{1}{2}M^{\mu\nu}M_{\mu\nu}$, is
\begin{equation}
  \hat{C}_2 = -\hat{B}_1^2 - \hat{B}_2^2 + \hat{J}^2 = r^2\partial_r^2 + 2r\partial_r.
  \label{Cas1}
\end{equation}

In this realisation, there is a fundamental relation between the
Casimir and the square of the dilatation generator,
\begin{equation}
  \hat{C}_2 = - \hat{D}^2 - \left(\frac{1}{2}\right)^2.
  \label{CD2}
\end{equation}

\subsection{Super-translations and $\lambda$-BMS algebras}
\label{sec:super-trans} 

Following  \cite{Delmastro:2017erq}, which generalises to arbitrary
dimensions the results in \cite{Longhi:1997zt}, we can construct
BMS-like algebras by considering the solutions of 
\begin{equation}
	-\hat{C}_2 \xi = \alpha\xi,
	\label{eigen2}
\end{equation}
where the minus sign is added in order to more easily connect with the
standard notation of the representations of the Lorentz group. As
shown in Appendix \ref{AppPoincare}, the solutions to (\ref{eigen2})
provide representations of the Lorentz algebra.

The Casimir eigenvalue equation (\ref{eigen2}) in terms of polar coordinates is
\begin{equation}
r^2 \partial_r^2\xi +2 r \partial_r\xi = -\alpha \xi,
\label{Casim1}
\end{equation}
Since this does not depend on the angular coordinate, the solutions
will be of the form  $\xi_n(r,\phi)=f(r)e^{in\phi}$, with
$n\in\Z$. Looking for radial solutions of the form
\begin{equation}
  f(r) = r^\beta
\end{equation}
one finds
\begin{equation}
	\beta= \frac{-1\pm\sqrt{1-4\alpha}}{2}=
	\frac{1}{2}(-1\pm (1-2\lambda)),
\end{equation}
where we have defined
\begin{equation}
  1-4\alpha = (1-2\lambda)^2.
\end{equation}
with inverse relation
\begin{equation}
  \alpha = -\lambda (\lambda-1).
\end{equation}
 
One gets thus the two families of solutions, with $\beta=-\lambda$ and
$\beta=\lambda-1$. In order to get all possible values of
$\beta\in\RR$, it is enough to take $\lambda\in\RR$ and consider only
the solutions $\beta=-\lambda$.\footnote{One can also consider complex
  values of $\lambda$ provided that the real part is $1/2$, since in
  this case $\alpha$ remains real. In the representation theory of
  $SO(2,1)$ this corresponds to the unitary principal series
  representation (see, e.g., \cite{Sun:2021thf}). However, we will not
  pursue this possibility in this work.}

The complete solution, with the $S^1$ angular coordinate, will  be
\begin{equation}
	\omega_n(r,\phi) = r^{-\lambda} e^{in\phi}, \quad n\in\Z,
\end{equation}
and then
\begin{equation}
	\hat{C}_2 \omega_n = \lambda(\lambda-1)  \omega_n.
\end{equation}

These $\omega_n$ can be put in (\ref{sPmu}) instead of $k^\mu$ to
obtain the corresponding conserved charges. For $\lambda=-1$ one gets
the charges corresponding to the standard BMS transformations, which
can then be used to define the transformations of the fields in the
canonical formalism. As discussed in \cite{Longhi:1997zt} and
\cite{Batlle:2017llu,Batlle:2022hwf}, these transformations are in
general non-local. See also Appendix A of \cite{Banerjee:2020qjj} for
a more general discussion in this regard.

The action of the Lorentz generators on $ \omega_n(r,\phi)$ 
is
\begin{equation}
  \begin{split}
    \hat{B}_1 \omega_n &= -\frac{i}{2}(n+\lambda)\omega_{n+1} + \frac{i}{2} (n-\lambda)\omega_{n-1},\\
    \hat{B}_2 \omega_n &= -\frac{1}{2}(n+\lambda)\omega_{n+1} - \frac{1}{2} (n-\lambda)\omega_{n-1},\\
    \hat{J}\omega_n &= n \omega_n,
  \end{split}
\end{equation}
These equations provide an infinite-dimensional realisation of the
$2+1$ Lorentz algebra in the space of the
$\{\omega_n\}_{n\in\Z}$. Looking at the zeros of the coefficients
appearing in the above equations, several cases, depending on the
value of $\lambda$, can be considered:
\begin{enumerate}
\item If $\lambda\notin\ZZ$, then the coefficients in the above
  equations can never be zero and the representation is
  irreducible. This corresponds to the complementary series in the
  standard representation theory of $SO(2,1)$ (see, e.g.,
  \cite{Sun:2021thf}).
  
\item In $\lambda = -N$, $N\in\NN$, then one has a finite
  representation in the space $\{\omega_n\}_{|n|\leq N}$. In
  particular, for $\lambda=-1$ one obtains the vector representation
  in the space of the $\{\omega_{-1},\omega_0,\omega_{1}\}$.
  
\item If $\lambda=0$, one obtains the trivial representation spanned
  by $\{\omega_0\}$.
  
\item If $\lambda=N$, $N\in\NN$, there appear two infinite-dimensional
  representations, spanned by $\{\omega_n\}_{n\leq -N}$ and
  $\{\omega_n\}_{n\geq N}$, respectively, which correspond to the
  highest and lowest weight representations of the standard
  literature.
\end{enumerate}

Up to this point, we have only talked about representations of the
Lorentz group, which are provided by the set of $\omega_n$.  However,
since the Lorentz operators are of first order (without zero order
term) and the $\omega_n$ are of zeroth order, the above expressions
coincide with the commutators
\begin{align}
	[\hat{B}_1, \omega_n] &= -\frac{i}{2}(n+\lambda)\omega_{n+1} + \frac{i}{2} (n-\lambda)\omega_{n-1},\label{LBMS1}\\
	[\hat{B}_2, \omega_n] &= -\frac{1}{2}(n+\lambda)\omega_{n+1} - \frac{1}{2} (n-\lambda)\omega_{n-1},\label{LBMS2}\\
	[\hat{J},\omega_n] &= n \omega_n.\label{LBMS3}
\end{align}
These commutators, together with those between $\hat{B}_1$, $\hat{B}_2$ and $\hat{J}$, and adding the trivial ones between zero order operators
\begin{equation}
  [\omega_n,\omega_m] =0
\end{equation}
provide a realisation of an infinite dimensional algebra which, for
$\lambda=-1$, is the standard $BMS$ algebra in $2+1$, which contains
the finite Poincaré algebra obtained by considering the subset
$\{\omega_{-1},\omega_0, \omega_1\}$. For general $\lambda\in\RR$ we
obtain what we will call $\lambda$-BMS algebras.

 As is the case for the representations of Lorentz, for $\lambda=-N$,
 $N\in\NN$, one can obtain finite  subalgebras of dimension
 $3+(2N+1)$, with generators $\hat{B}_1$, $\hat{B}_2$, $\hat{J}$ and
 $\omega_{-N},\ldots,\omega_N$, which we call $\lambda$-Poincaré.

Notice that $\lambda=-1$, which is the value that leads to the
Poincaré subalgebra in $d=3$ spacetime, corresponds to $\alpha=2$ in
equation (\ref{Casim1}), which is equal to $d-1$ for $d=3$. As shown
in  Appendix \ref{AppPoincare}, $d-1$ is the eigenvalue in
(\ref{Casim1}) that makes Poincaré appear as a subalgebra of
$\lambda$-Poincaré for arbitrary spacetime dimension $d$.

It is customary to write the above algebras in terms of\footnote{For
  the rest of this section we will suppress the $\,\hat{}\,$ on the
  differential operators.} $L_0=-J$ and the ladder operators $ {L}_1=
-i{B}_1+{B}_2$, $ {L}_{-1}=i{B}_1+{B}_2$, which are explicitly given
by
\begin{align}
  {L}_1 &= i e^{i\phi}(\partial_\phi-ir\partial_r),\label{Lp}\\
  {L}_{-1} &= i e^{-i\phi}(\partial_\phi+ir\partial_r),\label{Lm}\\
  {L}_0 &= i\partial_\phi.\label{L0}
\end{align}
 
One has then the $\lambda$-algebra in the form
\begin{align}
  & [ {L}_1, {L}_{-1}] = 2 {L}_0,\quad [ {L}_0, {L}_1] = - {L}_1, 
    \quad [ {L}_0,{L}_{-1}]={L}_{-1},\label{LorentzL}\\
  &[ {L}_1,\omega_n] = -(n+\lambda)\,\omega_{n+1},\label{L+P}\\
  &[ {L}_{-1},\omega_n] = - (n-\lambda)\,\omega_{n-1},\label{L-P}\\
  & [L_0,\omega_n] = -n\omega_n, \label{L-0}\\
  & [\omega_n,\omega_m] =0.
\end{align}

In terms of these operators, the Lorentz Casimir can be expressed as
\begin{equation}
  C_2 = - L_1 L_{-1} + L_0^2 + L_0.
  \label{LC_LLL}
\end{equation}

It should be noticed that, since the Lorentz Casimir $C_2$ is a second
order operator, the commutator of $C_2$ with $\omega_n$ differs from
the action of $C_2$ on $\omega_n$, yielding a first order operator
(with zeroth term) instead of a function:
\begin{equation}\label{C2w}
  \begin{split}
    [C_2,\omega_n] &= [r^2\partial_r^2 + 2r\partial_r, r^{-\lambda} e^{in\phi}] = \lambda(\lambda-1) r^{-\lambda} e^{in\phi} - 2 \lambda r^{-\lambda+1} e^{in\phi}\partial_r\\
    &= C_2\omega_n - 2 \lambda \omega_n E,
  \end{split}
\end{equation}
where $E=r\partial_r$ is the Euler operator, which in Cartesian
coordinates has the expression $E=k_1\partial_{k_1}+k_2\partial_{k_2}$.

\subsubsection{Quadratic Casimirs of the $\lambda$-BMS algebras}
\label{sec:quadr-casim-lambda-BMS}

Since the quadratic Casimir of the Lorentz algebra does not commute
with the $\omega_n$, a general quadratic Casimir of the $\lambda$-BMS
algebra can only involve the $\omega_n$ generators, and will be of the
form
\begin{equation}
  C_2^\lambda = \sum_{n\in\Z}\sum_{m\in\Z} A_{nm}\, \omega_n\omega_m, \quad A_{nm}=A_{mn}.
\end{equation}
Demanding $[L_0,C_2^\lambda]=0$ leads to that $A_{nm}$ can be different from zero only if $n+m=0$, so that
\begin{equation}
  C_2^\lambda = \sum_{n\in\Z}  A_{n}\, \omega_n\omega_{-n}, \quad A_{n}=A_{n (-n)}, 
\end{equation}
with $A_{n}=A_{-n}$ due to the symmetry of the $A_{nm}$.  Imposing
$[L_1,C_2^\lambda]=0$, re-arranging terms and using $A_{n}=A_{-n}$,
one obtains the set of equations
\begin{equation}
  (-\lambda-n)A_n  +  (-\lambda+n+1)A_{n+1} =0,\quad n=0,1,2,\ldots,
  \label{casimir_rec}
\end{equation}
which are the same relations that are obtained imposing
$[L_{-1},C_2^\lambda]=0$. For $\lambda\notin\ZZ$, the recurrence
relations (\ref{casimir_rec}) can be solved for all the $A_n$ in terms
of $A_0$, and the Casimir has infinite terms. We will discuss
(\ref{casimir_rec}) for $\lambda\in\ZZ$, separating the cases
$\lambda<0$ and $\lambda>0$ (the case $\lambda=0$ is obviously
trivial).

\begin{itemize}
\item If $-\lambda=N\in\NN$, equations (\ref{casimir_rec}) become
  \begin{equation}
    (N-n)A_n + (N+n+1)A_{n+1} =0, \quad n=0,1,2,\ldots
  \end{equation}
  which can be solved for $A_1, A_2,\ldots, A_N$ in terms of $A_0$, and one obtains
  \begin{equation}
    A_n = (-1)^n \frac{N (N-1)\cdots (N-n+1)}{(N+n)(N+n-1)\cdots(N+1)}A_0,\quad n=1,2,\ldots,N.
  \end{equation}
  However, the equation for $n=N$ is just
  \begin{equation*}
    0\cdot A_N + (2N+1) A_{N+1} =0,
  \end{equation*}
  which implies $A_{N+1}=0$, and subsequently also $A_n=0$ for all
  $n>N$. Thus, using that $A_n=A_{-n}$ and taking $A_0=1$, the Casimir
  boils down to
  \begin{equation}
    C_2^\lambda = \omega_0^2 + 2 \sum_{n=1}^N (-1)^n \frac{N (N-1)\cdots (N-n+1)}{(N+n)(N+n-1)\cdots(N+1)} \omega_n\omega_{-n},
    \label{CasimirL}
  \end{equation}
  which will also be the Casimir for the finite dimensional $\lambda$-Poincaré algebra.

  For instance, for $\lambda=-1$ one obtains the well-known Poincaré quadratic Casimir 
  \begin{equation}
    C_2^{\lambda=-1} = \omega_0^2 - \omega_1\omega_{-1},
  \end{equation} 
  while for $\lambda=-2$ one gets
  \begin{equation}
    C_2^{\lambda=-2} = \omega_0^2 - \frac{4}{3} \omega_1 \omega_{-1} + \frac{1}{3} \omega_{2}\omega_{-2}.
  \end{equation}

\item If $\lambda=N\in\NN$ the recurrence relation is
  \begin{equation}\label{eq:recurrence}
    (-N-n)A_n + (-N+n+1)A_{n+1} =0,\quad n=0,1,2,\ldots
  \end{equation}
  For $n=N-1$ one obtains
  \begin{equation}
    (-2N+1)A_{N-1} +0\cdot A_N  =0,
  \end{equation} 
  so that $A_{N-1}=0$, 
  which then forces $A_{N-2}=\ldots=A_1=A_0=0$, and the resulting Casimir has infinite terms
  \begin{equation}
    \tilde{C}_2^\lambda  = 2\sum_{n=N}^\infty A_n {\omega}_n {\omega}_{-n},
  \end{equation}
  with all the $A_n$ computed in terms of $A_{N}$ using the recurrence
  relation~\eqref{eq:recurrence}, which can be solved for the generating
  function $A(z) = \sum_{n=0}^\infty A_{n+N} z^n$ as follows:
  \begin{equation}
    A(z) = \frac{A_N}{(1-z)^{2N}}.
  \end{equation}
\end{itemize}

\subsection{Super-rotations and extended $\lambda$-BMS algebras}
\label{sec:super-rots}

In the same way that super-translations are obtained by computing all
the solutions of the Lorentz Casimir $C_2$ eigenvalue equation, one
may try to obtain generalisations of the Lorentz generators by
computing all the first order differential operators that commute with
$C_2$.

Using polar coordinates in the massless mass-shell manifold we look for an operator
\begin{equation}
	L(r,\phi) = a(r,\phi)\partial_r + b(r,\phi)\partial_\phi
\end{equation}
such that $[L,C_2]=0$, with $C_2$ the second order differential operator in (\ref{Cas1}). One has
\begin{align}
	[L,C_2] &= \left(2ar-2r^2 \partial_r a \right) \partial_r^2 \nonumber\\
	&+ \left(2a-2r \partial_r a - 2 r\partial_r b - r^2 \partial_r^2 a \right) \partial_r \nonumber\\
	&- 2 r^2 \partial_rb \partial_r\partial_\phi - r^2 \partial_ r^2 b \partial_\phi.
\end{align}
The cancellation of the term in $\partial_r\partial_\phi$ forces $\partial_r b=0$, so that $b=b(\phi)$. This also cancels the term in $\partial_\phi$, and demanding that the two remaining terms are zero leads to $\partial_r^2 a =0$ and to
\begin{equation}
	r \partial_r a = a,
\end{equation}
with general solution $a(r,\phi) = r c(\phi)$, which also satisfies $\partial_r^2 a =0$. Hence, the most general first-order differential operator commuting with the Lorentz Casimir is 
\begin{equation}
	L= r c(\phi) \partial_r + b(\phi) \partial_\phi.
\end{equation}
Besides commuting with the Casimir, the differential operators
associated to the Lorentz generators are also divergenceless. Since
the Lorentz-invariant volume form on the massless mass-shell manifold
is proportional to $\d r\wedge\d \phi$, the divergence of a field in
polar coordinates is just the sum of the partial derivatives (see also
appendix B in \cite{Batlle:2017llu}), and we have
\begin{equation}
  0= \div  L = c(\phi) + \partial_\phi b(\phi),
\end{equation}
from which $c(\phi) = -b'(\phi)$. The first order differential operators that share all the relevant properties with the Lorentz generators are thus
\begin{equation}
	L= - r  b'(\phi)\partial_r + b(\phi) \partial_\phi.
\end{equation}
Since $\phi\in S_1$, we can expand in Fourier series and obtain an infinite set of operators $L_n$ indexed by $n\in\ZZ$. Writing $b(\phi) = i e^{in\phi}$, the resulting operators, called super-rotations \cite{Barnich:2009se}, are
\begin{equation}
	L_n = i e^{in\phi} (\partial_\phi - inr \partial_r),\quad n\in\ZZ,
	\label{superL}
\end{equation}
which coincide with the standard Lorentz generators for $n=-1,0,1$.

Adding the super-rotations one gets the extended $\lambda$-BMS algebras
\begin{align}
	[L_n,L_m] &= (n-m)L_{n+m},\label{extLL}\\
	[L_n, \omega_m] &= -(m+\lambda n)\omega_{n+m},\label{extLP}\\
	[\omega_n,\omega_m] &= 0,\label{extPP}
\end{align}
with $n,m\in\ZZ$. This algebra is the semi-direct sum of the Witt
algebra with its tensor density modules (see section
\ref{sec:weyl-lambda-bms}). Following the notation of
\cite{Figueroa-OFarrill:2024wgs}, we refer to it as $\g_\lambda$. It
is a special case of the $W(a,b)$ algebras \cite{MR2773310}, where
$a\in\ZZ$ and $b=\lambda$. Setting $\lambda=-1$ recovers the
centreless BMS algebra, whose most general deformation is shown to be
$W(a,b)$\footnote{We refer the reader to \cite{Banerjee:2023yln,
    McStay:2023thk} for some appearances of $W(0,-2)$ in physics.}
(even when central extensions are included)
\cite{FarahmandParsa:2018ojt}.

From (\ref{extLP}) one also obtains
\begin{equation}
  L_n\omega_m = -(m+\lambda n)\omega_{n+m},\label{RWitt}
\end{equation}
which provides, for each $\lambda\in\RR$, a representation of the Witt algebra.

Notice that, using (\ref{extLL}) and the Lorentz Casimir in the form  (\ref{Cas1})
\begin{equation}
  \begin{split}
    [L_n, C_2] &= -(n-1)L_{n+1}L_{-1} - (n+1) L_1 L_{n-1} + n L_nL_0 + nL_0 L_n + n L_n \\
    &= - (n-1) L_{n+1}L_{-1}- (n+1) L_{n-1} L_1 + 2n L_n L_0 - 2L_n.
  \end{split}
\end{equation}
By construction of the $L_n$ this must be zero, and the following identity must hold:
\begin{equation}
  - (n-1) L_{n+1}L_{-1}- (n+1) L_{n-1} L_1 + 2n L_n L_0 - 2L_n = 0, \quad n\in\ZZ.
  \label{CasLIden}
\end{equation}
This can be checked directly by using the explicit form
(\ref{superL}).\footnote{As a vector in the universal enveloping
  algebra $\mathfrak{U}$ of the Witt algebra, $(n-1) L_{n+1}L_{-1} +
  (n+1) L_{n-1} L_1 - 2n L_n L_0 + 2L_n \neq 0$, but of course it is
  in the kernel of the algebra homomorphism from $\mathfrak{U}$ to
  differential operators on the punctured plane acting on smooth
  functions.  The kernel of this homomorphism has been calculated in
  \cite[Theorem~1.2]{MR2320470} and it is a principal two-sided ideal
  of $\mathfrak{U}$ generated by $Z_1 := \frac12 (L_2 L_{-1} - L_1 L_0
  - L_1)$.  A quick calculation shows that $[L_{-1}, Z_1] = C_2$ and
  therefore $[L_n,C_2] = [L_n, [L_{-1}, Z_1]] = L_n L_{-1} Z_1 - L_n
  Z_1 L_{-1} - L_{-1} Z_1 L_n + Z_1 L_{-1} L_n$ indeed belongs to the
  two-sided ideal $\mathfrak{U} Z_1 \mathfrak{U}$.}

\subsection{Super-dilatations and the centreless Weyl $\lambda$-BMS algebras}
\label{sec:super-dils}

Consider a general differential operator of order up to one
\begin{equation} 
  {\mathbb D}=a\partial_ r + b\partial_\phi + c.
\end{equation}
Demanding that the $\omega_n$ provide a representation of ${\mathbb D}$,
\begin{equation}
  {\mathbb D} \omega_n = \alpha \omega_m,
\end{equation}
with $\alpha$ possibly depending on $m,n$,
one immediately gets
\begin{equation}
  a= k_1 r e^{i(m-n)\phi},\quad b=k_2 e^{i(m-n)\phi},\quad c= k_3 e^{i(m-n)\phi},
\end{equation}
with $k_i$ constants, and then
\begin{equation}
  \alpha = -k_1 \lambda + i k_2 n + k_3.
\end{equation}
One has then the operators
\begin{equation}
  D_{m-n} = e^{i(m-n)\phi} (k_1 r \partial_r + k_2 \partial_\phi + k_3),
\end{equation}
or, setting $p=m-n$,
\begin{equation}
  D_p = e^{ip\phi} (k_1 r \partial_r + k_2 \partial_\phi + k_3),
\end{equation}
which yields
\begin{equation}
  D_p \omega_n = ( -k_1 \lambda + i k_2 n + k_3)\omega_{n+p}.
\end{equation}
Demanding $D_0=D$ fixes $k_1=1$, $k_2=0$ and $k_3=1/2$, and one gets
\begin{equation}
  D_n =  e^{in\phi} \left(r\partial_r +\frac{1}{2}\right), \quad n\in\ZZ,
\end{equation}
which, for $n=0$ gives $D_0= -i D$, with $D$ the standard dilatation
operator defined in (\ref{opD}). Up to an
overall constant factor, these agree with the super-dilatation operators
introduced in \cite{Batlle:2020hia}.

Adding the commutators of $D_n$ with the super-translations and
super-rotations one obtains the centreless Weyl $\lambda$-BMS algebra:
\begin{align}
	[L_n,L_m] &= (n-m)L_{n+m},\label{WextLL}\\
	[L_n, P_m] &= -(m+\lambda n)P_{n+m},\label{WextLP}\\
	[P_n,P_m] &= 0,\label{WextPP}\\
	[L_n,D_m] &= -m D_{m+n},\label{WextLD}\\
	[D_n, P_m] &= -\lambda P_{m+n},\label{WextDP}\\
	[D_n, D_m] &= 0,\label{WextDD}
\end{align}
where we have nominally replaced the super-translation functions
$\omega_n$ with the tensor densities\footnote{See Section
  \ref{sec:weyl-lambda-bms}.} $P_n$ since, as shown by (\ref{WextDP}),
the $\omega_n$ do not transform as true functions under
super-dilatations, unless $\lambda=0$.

Notice that (in the case $\lambda \neq 0$) by redefining $D_n
\longrightarrow -D_n/\lambda$ we can set the coefficient of the
right-hand side of (\ref{WextDP}) to unity without changing the other
commutation relations.  We note that the algebra is not centrally
extended in this realisation.  We will see in
Section~\ref{sec:weyl-lambda-bms} that this algebra admits a
three-parameter family of central extensions (see also
\cite{Freidel:2021fxf} for the case of $\lambda=-1$).

\subsection{BMS-like super-special-conformal generators}
\label{sec:superconf}

The conformal algebra adds generators $D$ (dilatations) and $K^\mu$
(special-conformal transformations) to the Poincaré algebra. As shown in
\cite{Batlle:2020hia}, the corresponding differential
operators\footnote{We work here in Cartesian coordinates since the
  expression of the special-conformal operators in polar coordinates
  is not particularly simple.} in $2+1$ spacetime acting on the
mass-shell manifold of a massless scalar field are, besides $D$,
\begin{align}
	K^0 &= \omega \frac{\partial^2}{\partial k_i\partial k_i} ,\\
	K^j &= \left( k_j \frac{\partial}{\partial k_i} - 2 k_i \frac{\partial}{\partial k_j}\right)  \frac{\partial}{\partial k_i} - \frac{\partial}{\partial k_i},\quad i=1,2,
\end{align}
and the commutators involving them are, besides those of the Poincaré algebra,
\begin{align}
	[D,K^\mu] &= i K^\mu,\label{confDK}\\
	[K^\mu,M^{\nu\sigma}] &= -i (\eta^{\mu\sigma}K^\nu - \eta^{\mu\nu}K^\sigma), \label{confKM}\\
	[K^\mu,P^\nu] &= - 2i (\eta^{\mu\nu}D + M^{\mu\nu}),\label{confKP}\\
	[K^\mu,K^\nu] &= 0, \label{confKK}
\end{align}
for $\mu,\nu,\sigma=0,1,2$. 

Equations (\ref{confKM}) and (\ref{confKK}) show, as is well known,
that the conformal algebra contains a second realisation of the
Poincaré algebra, given by the Lorentz generators and the
special-conformal ones.

One might wonder whether it is possible to repeat the construction of
the BMS-like algebras using the special-conformal generators instead
of the translation ones. A proposal in this direction was discussed in
\cite{Batlle:2020hia, Campello:2024th}, generalising the
special-conformal generators by making them dependent on an arbitrary
integer index, but the resulting algebra was in fact a W-algebra,
rather that a Lie algebra \cite{Fuentealba:2020zkf}.  In this context,
W-algebras are studied in Section \ref{sec:conformal-bms-w} from
the point of view of conformal field theory.

One can also study whether it is possible to write down some master
equation similar to (\ref{eigen2}) but for the special conformal
transformations. However, since in the canonical formalism these are
second order differential operators, one is led to consider the
commutator with the Lorentz Casimir instead of the action of the
Casimir on these operators. One can see that, if $K$ is any of the 3
special conformal generators in $2+1$, one has
\begin{equation}
  [C_2,K]=-2 K E,
  \label{C2Keq}
\end{equation}
with $E$ the Euler operator. It is possible to obtain other solutions
to this equation, in the form of second-order differential operators,
but in the end one finds out that the resulting set of generators
$K_n$, containing $K^0$ and the complex combinations of  $K^1$ and
$K^2$ for $n=0,\pm 1$, do not commute, except for $n=0,\pm 1$, and
hence do  not qualify as super-special conformal transformations.

A separate but related question is whether it is possible to consider
a generalised equation $[C_2,K]=\lambda K E$, with $\lambda$ arbitrary
and with the restriction of $K$ being a second-order differential
operator. A detailed study of the resulting set of differential
equations shows that there are solutions only for $\lambda=-2$ and
$\lambda=0$, and hence one does not have the liberty of adding the
parameter $\lambda$ that appears in the case of super-translations.

\section{The BRST complex of the Weyl $\lambda$-BMS algebra}
\label{sec:weyl-lambda-bms}

We will now depart from the mode algebra in equations
\eqref{WextLL}--\eqref{WextDD} and explore its possible central
extensions, resulting in the (centrally extended) Weyl $\lambda$-BMS
algebra.  We will then reformulate the centrally extended algebra in
terms of operator product expansions.  We will then discuss the
construction of Weyl $\lambda$-BMS strings by constructing the BRST
complex and showing that the BRST cohomology is isomorphic to the
chiral ring of a topologically twisted $N{=}2$ superconformal field
theory.

\subsection{The (centrally extended) Weyl $\lambda$-BMS algebra}
\label{sec:weyl-lambda-bms-voa}

Firstly, we reformulate the centreless Weyl $\lambda$-BMS algebra,
determine the possible central extensions and rewrite the centrally
extended algebra in terms of operator product expansions.

\subsubsection{The centreless Weyl $\lambda$-BMS algebra}
\label{sec:centr-weyl-lambda}

It is convenient to re-interpret the centreless Weyl $\lambda$-BMS
algebra in terms of natural objects associated to the punctured
complex plane.\footnote{The coordinates $(r,\phi)$ used for the
  lightcone in Section~\ref{sec:cano-real} are reminiscent of polar
  coordinates for the punctured complex plane.  There is however no
  change of variables relating $(r,\phi)$ to the complex coordinate
  $z$ used in this section.  One way to see this is to consider the
  functions annihilated by the Witt generators $L_n$.  For the
  realisation in this section, it is clear that any anti-holomorphic
  function -- i.e., any function of $\bar z$ -- is annihilated by
  $L_n = -z^{n+1} \partial$, whereas the operator $L_n$ given in
  equation~\eqref{superL} annihilates functions of $r e^{i n \phi}$,
  which clearly depends on $n$.}  To that end, we let
$A = \CC[z,z^{-1}]$ denote the associative algebra of Laurent
polynomials in a complex variable $z$.  The Lie algebra
$W = \Der _\CC A$ of derivations is the \textbf{Witt algebra}, which
is isomorphic to the Lie algebra of polynomial vector fields on the
circle.  Let $\lambda \in \ZZ$ and let $I(\lambda)$ denote the
one-dimensional $A$-module spanned by $(dz)^\lambda$; that is, a
typical vector in $I(\lambda)$ is of the form $f(z) (dz)^\lambda$ with
$f(z) \in A$.  The tacit understanding is that if $\lambda < 0$,
$(dz)^\lambda = \left( \frac{d}{dz} \right)^{-\lambda}$.  We introduce
bases $D_n := z^n$, $L_n := - z^{n+1} \partial$, where
$\partial f(z) = \frac{df}{dz}$ and
$P_n := z^{n-\lambda} (dz)^\lambda$, for $n \in \ZZ$, for the
$A$-modules $A$, $W$ and $I(\lambda)$, respectively.

We define a Lie algebra structure on the vector space $W \oplus A
\oplus I(\lambda)$ as follows:
\begin{itemize}
\item $W$ is a Lie algebra under the Lie bracket of vector fields:
  \begin{equation}
    [L_n,L_m] = (n-m) L_{n+m};
  \end{equation}
\item $A$ is an abelian Lie algebra, is acted on by $W$ as
  derivations:
  \begin{equation}
    [L_n, D_m] = -m D_{n+m};
  \end{equation}
\item $I(\lambda)$ is an abelian Lie algebra and is acted on by $W$ via the Lie
  derivative:
  \begin{equation}
    [L_n, P_m] = - (m+ \lambda n) P_{n+m}
  \end{equation}
  and by $A$ via the module action:
  \begin{equation}
    [D_m, P_n] = P_{m+n}.
  \end{equation}
  We could rescale $D_m$ by any nonzero $\mu$ and arrive at $[D_m,P_n]
  = \mu P_{m+n}$, without altering any of the other Lie brackets.  In
  particular, choosing $\mu = - \lambda$ we would arrive at the
  Lie bracket~\eqref{WextDP}.  As mentioned earlier, we will take $\mu
  = 1$ in what follows.
\end{itemize}

The Lie algebra generated by $\{L_n, D_n, P_n\}_{n\in\ZZ}$ given by
the above brackets is the double semi-direct product of Lie algebras
$\w_\lambda = (W\ltimes A)\ltimes I(\lambda)$ and it is
clearly isomorphic to the centreless Weyl $\lambda$-BMS algebra in
Section~\ref{sec:cano-real}: see
equations~\eqref{WextLL}--\eqref{WextDD}.  The Lie subalgebra
$W \ltimes A$ is often called the \textbf{Heisenberg--Virasoro
  algebra} \cite{MR1473758, MR946992, tan2021simple,
  adamovic2024nappiwittenvertexoperatoralgebra}, and corresponds to
the algebra of differential operators of degree at most $1$. For
$\lambda = -1$, $\w_{-1}$ agrees with the Weyl BMS Poisson algebra of
\cite{Batlle:2020hia}: see equations (4.16)-(4.21) in that paper.

\subsubsection{Central extensions}
\label{sec:central-extensions}

Our first order of business is to determine the possible central
extensions (up to equivalence).  They are classified by the second
cohomology of $\w_\lambda$ with values in the trivial
representation. We make use of the following lemma to simplify our
computation.

\begin{lemma}\label{lem:injective_map_in_cohomology}
  Let $\g$ be a Lie algebra and $\h\subset\g$ an ideal such that
  $[\g,\h] = \h$.  Then the canonical surjective homomorphism
  $\pi\colon\g\to\g/\h$ induces an injective linear map $\pi^*\colon
  H^2(\g/\h) \to H^2(\g)$ in cohomology.
\end{lemma}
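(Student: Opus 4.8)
The plan is to work directly with the Chevalley--Eilenberg cochain complex computing Lie algebra cohomology with values in the trivial module, where $p$-cochains are alternating maps $\Lambda^p\g \to \CC$ and the differential on a $1$-cochain $\beta$ is $(d\beta)(x,y) = -\beta([x,y])$. The homomorphism $\pi$ induces the pullback cochain map $\pi^*$ by $(\pi^*\alpha)(x_1,\dots,x_p) = \alpha(\pi(x_1),\dots,\pi(x_p))$, which commutes with $d$ and hence descends to the map $\pi^*$ in cohomology. To prove injectivity in degree two I would start from a $2$-cocycle $\omega$ on $\g/\h$ whose class lies in $\ker\pi^*$; by definition this means $\pi^*\omega = d\beta$ for some $\beta\in\g^*$, and the goal is to exhibit a $1$-cochain $\bar\beta$ on $\g/\h$ with $\omega = d\bar\beta$.

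The key step is to show that $\beta$ vanishes on $\h$. To see this, I would evaluate the identity $\pi^*\omega = d\beta$ on a pair $(x,h)$ with $x\in\g$ and $h\in\h$. Since $h \in \ker\pi$, the left-hand side is $\omega(\pi(x),\pi(h)) = \omega(\pi(x),0) = 0$, while the right-hand side equals $-\beta([x,h])$. Hence $\beta([x,h]) = 0$ for all $x\in\g$ and $h\in\h$, i.e., $\beta$ annihilates the subspace $[\g,\h]$. This is exactly the point at which the hypothesis enters: because $[\g,\h]=\h$, we conclude $\beta|_\h = 0$, so $\beta$ factors through $\pi$ and defines a linear functional $\bar\beta\in(\g/\h)^*$ by $\bar\beta(\pi(x)) := \beta(x)$, which is well defined precisely because $\ker\pi = \h \subseteq \ker\beta$.

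It then remains to verify $\omega = d\bar\beta$. Using that $\pi$ is a surjective Lie algebra homomorphism and $\bar\beta\circ\pi = \beta$, for any $x,y\in\g$ one has $(d\bar\beta)(\pi(x),\pi(y)) = -\bar\beta(\pi([x,y])) = -\beta([x,y]) = (d\beta)(x,y) = (\pi^*\omega)(x,y) = \omega(\pi(x),\pi(y))$. Since $\pi$ is onto, every pair of elements of $\g/\h$ is of the form $(\pi(x),\pi(y))$, so $\omega = d\bar\beta$ and hence $[\omega]=0$ in $H^2(\g/\h)$, establishing injectivity.

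I do not expect a serious obstacle: the argument is short and its entire content is concentrated in the hypothesis $[\g,\h]=\h$, which is exactly what upgrades the conclusion ``$\beta$ vanishes on $[\g,\h]$'' to ``$\beta$ vanishes on $\h$''. It is worth emphasising that $\h$ being merely an ideal would not suffice: the equality $[\g,\h]=\h$ is essential, and the same descent argument goes through verbatim in every cohomological degree, though only degree two is needed here.
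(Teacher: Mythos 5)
Your proof is correct and follows essentially the same route as the paper's: both reduce injectivity to showing that the $1$-cochain $\beta$ with $\pi^*\omega=d\beta$ annihilates $\h$, by evaluating the coboundary identity with one argument in $\h$ and invoking $[\g,\h]=\h$ to pass from ``$\beta$ vanishes on $[\g,\h]$'' to ``$\beta$ vanishes on $\h$''. One caveat on your closing remark: the argument does \emph{not} go through verbatim in higher degrees, since for $p>2$ the descended $(p-1)$-cochain must vanish whenever \emph{any single} argument lies in $\h$, and the analogous evaluation of $d\beta$ on tuples with one entry in $\h$ no longer isolates $\beta$ on such tuples.
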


\begin{proof}
  Recall\footnote{We refer the reader to \cite{MR874337} for a review of
    Lie algebra cohomology.} that the space of $p$-cochains of any Lie
  algebra $\g$ with values in the base field (here $\RR$) viewed as a
  trivial representation is given by
  $C^p(\g) = \Hom(\Lambda^p\g,\RR)$. We are particularly
  interested in the first few terms:
  \begin{equation}
    \begin{tikzcd}
      C^1(\g) \arrow[r,"d"] & C^2(\g) \arrow[r,"d"] & C^3(\g)
    \end{tikzcd}
  \end{equation}
  where for $\beta \in C^1(\g)$ and $\varphi \in C^2(\g)$, their
  differentials are given by
  \begin{equation}
    d\beta(X,Y) = -\beta([X,Y]) \qquad\text{and}\qquad d\varphi(X,Y,Z)
    = -\varphi([X,Y],Z) + \varphi([X,Z],Y) - \varphi([Y,Z],X).
  \end{equation}
  If $\h \subset \g$ is an ideal, the Chevalley--Eilenberg complex for
  the quotient Lie algebra $\g/\h$ can be understood as a subcomplex
  of $C^{\dotr}(\g)$.  Namely, any $\varphi \in C^p(\g/\h)$ may be
  extended uniquely to a $\varphi \in C^p(\g)$ such that
  $\varphi(X,\cdots) = 0$ for all $X \in \h$.  Such cochains are
  preserved by the Chevalley--Eilenberg differential precisely because
  $\h$ is an ideal and hence define a subcomplex.

  Now suppose that a cocycle $\varphi \in C^2(\g/\h)$ is a coboundary
  in $C^2(\g)$; that is, $\varphi = d\beta$ for some $\beta \in
  C^1(\g)$; that is, $\varphi(X,Y) = -\beta([X,Y])$.  We claim that
  $\beta \in C^1(\g/\h)$.  Indeed, suppose that $Z \in \h$.  By
  hypothesis, there exist $X_i \in \h$ and $Y_i \in \g$ such that $Z =
  \sum_i [X_i, Y_i]$.  Hence
  \begin{equation}
    \beta(Z) = \sum_i \beta([X_i,Y_i]) = - \sum_i \varphi(X_i,Y_i) = 0,
  \end{equation}
  since $X_i \in \h$ and $\varphi \in C^2(\g/\h)$.  In other words, if
  $\pi^*([\varphi]) = 0 \in H^2(\g)$, then $[\varphi] = 0 \in H^2(\g/\h)$.
  \end{proof}
  
\begin{remark}
  Lemma \ref{lem:injective_map_in_cohomology} requires $[\g,\h]=\h$,
  as can easily be seen when $\h$ is the centre of $\g$.
\end{remark}

\begin{proposition}
    The Lie algebra $\w_\lambda = (W\ltimes A)\ltimes
    I(\lambda)$ admits a 3-dimensional universal central extension
    generated by the 2-cocycles
    \begin{equation} \label{eq: Weyl lambda-BMS 2-cocycles}
        \begin{split}
        \gamma_{LL}(L_n, L_m) &= \frac{1}{12}n(n^2-1)\delta^0_{m+n}\\
        \gamma_{LD}(L_n, D_m) &= \frac{1}{2}n(n+1) \delta^0_{m+n}\\
        \gamma_{DD}(D_n, D_m) &= n\delta^0_{m+n}.
        \end{split}  
    \end{equation}
\end{proposition}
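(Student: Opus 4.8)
The plan is to compute the second Lie algebra cohomology $H^2(\w_\lambda)$ with trivial coefficients and to show it is $3$-dimensional with the displayed cocycles as a basis; since $\w_\lambda$ is perfect (for instance $[D_0,P_n]=P_n$, $[L_n,D_{-n}]=nD_0$, and $W$ is already perfect), this $H^2$ is precisely the datum of the universal central extension. First I would reduce to weight-zero cochains. The adjoint action of $L_0$ grades $\w_\lambda$ by $[L_0,X_n]=-nX_n$ for $X\in\{L,D,P\}$, and the induced Lie derivative $\mathcal L_{L_0}=d\,\iota_{L_0}+\iota_{L_0}\,d$ on the Chevalley--Eilenberg complex acts on any cochain supported on a pair $(X_n,Y_m)$ by the scalar $n+m$. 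Hence a homogeneous cocycle of nonzero weight $n+m$ is automatically a coboundary, and every class has a representative $\gamma$ supported on pairs with $n+m=0$; this is the origin of the factors $\delta^0_{m+n}$.

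For the upper bound $\dim H^2(\w_\lambda)\le 3$ I would show that the map $\pi^*\colon H^2(W\ltimes A)\to H^2(\w_\lambda)$ induced by the projection $\pi\colon\w_\lambda\to\w_\lambda/\h$, with $\h=I(\lambda)$, is surjective; equivalently, that every weight-zero cocycle is cohomologous to one vanishing whenever an argument lies in $\h$. Writing $c_{PP}(n)=\gamma(P_n,P_{-n})$, $c_{DP}(n)=\gamma(D_n,P_{-n})$ and $c_{LP}(n)=\gamma(L_n,P_{-n})$, I would evaluate the cocycle identity $d\gamma=0$ on the triples $(D_n,P_m,P_p)$, $(D_n,D_m,P_p)$ and $(D_n,L_m,P_p)$. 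The first yields $c_{PP}\equiv 0$; the second forces $c_{DP}$ to be a constant $c$; and the third then pins down $c_{LP}(n)=(1-\lambda)\,c\,n$. Finally, the $1$-cochain $\beta$ with $\beta(P_0)=-c$ (and zero on all other basis vectors) has coboundary supported entirely on the $P$-components, with $d\beta(D_n,P_{-n})=c$ and $d\beta(L_n,P_{-n})=(1-\lambda)cn$, so that $\gamma-d\beta$ vanishes on $\h$; crucially $d\beta$ leaves the $W\ltimes A$ block untouched because $[L_n,L_{-n}]$, $[L_n,D_{-n}]$ and $[D_n,D_{-n}]$ never produce $P_0$. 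This establishes surjectivity.

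For injectivity and the identification of the basis I would invoke Lemma~\ref{lem:injective_map_in_cohomology}: its hypothesis $[\w_\lambda,\h]=\h$ holds since $[D_0,P_n]=P_n$, so $\pi^*$ is injective. Combined with the surjectivity above, $\pi^*$ is an isomorphism, whence $H^2(\w_\lambda)\cong H^2(W\ltimes A)$. It then remains to recall that $W\ltimes A$ is the Heisenberg--Virasoro algebra, whose second cohomology is $3$-dimensional and spanned by the Virasoro class $\gamma_{LL}$, the mixed class $\gamma_{LD}$ and the Heisenberg class $\gamma_{DD}$ \cite{MR1473758}; alternatively one recomputes this by the analogous weight-zero recursions for $c_{LL},c_{LD},c_{DD}$. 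A short direct check confirms these three are genuine cocycles on $\w_\lambda$ (they extend by zero on $\h$) and are independent modulo coboundaries: every coboundary vanishes on $(D,D)$ and is at most linear in $n$ on $(L,D)$ and $(L,L)$, whereas $\gamma_{DD}$ is nonzero on $(D,D)$, $\gamma_{LD}$ is quadratic in $n$ on $(L,D)$, and $\gamma_{LL}$ is cubic in $n$ on $(L,L)$.

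I expect the main obstacle to be the bookkeeping in the upper bound: one must verify that the three triples above exhaust the constraints on the $P$-components and that the single parameter $\beta(P_0)$ annihilates $c_{DP}$ and the forced value of $c_{LP}$ simultaneously, without perturbing the Heisenberg--Virasoro block. A secondary point worth checking is the behaviour at exceptional $\lambda$ --- in particular $\lambda=1$, where the coefficient $1-\lambda$ degenerates --- but there $c_{LP}$ is simply forced to vanish and the coboundary $d\beta$ still removes $c_{DP}$, so the conclusion is unchanged.
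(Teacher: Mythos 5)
Your argument is correct, and it reaches the upper bound by a genuinely different route from the paper. For the lower bound both you and the paper apply Lemma~\ref{lem:injective_map_in_cohomology} to the ideal $\h$ spanned by the $P_n$ (valid since $[D_0,P_n]=P_n$), obtaining an injection $H^2(W\ltimes A)\hookrightarrow H^2(\w_\lambda)$ and importing the known three-dimensionality of $H^2$ of the Heisenberg--Virasoro algebra. For the upper bound, however, the paper realises $\w_\lambda$ as a quotient of the planar galilean conformal algebra $\g$ by the ideal spanned by the $I_n$ and applies the same lemma in the other direction, so that $\dim H^2(\w_\lambda)\le\dim H^2(\g)=3$ by the Gao--Liu--Pei computation (extended to general $\lambda$); you instead prove directly that $\pi^*$ is \emph{surjective}, by reducing to weight-zero cocycles via the $L_0$-grading and then showing through the triples $(D_n,P_m,P_p)$, $(D_n,D_m,P_p)$, $(D_n,L_m,P_p)$ that $c_{PP}\equiv 0$, $c_{DP}\equiv c$ and $c_{LP}(n)=(1-\lambda)cn$, all of which are cancelled by the coboundary of the $1$-cochain with $\beta(P_0)=-c$ without disturbing the $W\ltimes A$ block. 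I have checked these evaluations and the coboundary computation ($d\beta(D_n,P_{-n})=c$, $d\beta(L_n,P_{-n})=(1-\lambda)cn$, $d\beta=0$ on the $W\ltimes A$ block since $[L_n,L_{-n}]$, $[L_n,D_{-n}]$, $[D_n,D_{-n}]$ never hit $P_0$); they are all correct, as is your degree-counting argument for linear independence modulo coboundaries. Your route is more self-contained (it avoids the external input on the planar GCA and yields the stronger statement that $\pi^*\colon H^2(W\ltimes A)\to H^2(\w_\lambda)$ is an isomorphism, not merely injective), at the cost of the explicit cochain bookkeeping; the paper's route is shorter on the page but leans on two external cohomology computations and on the unproved assertion that the planar GCA result persists for all $\lambda$. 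Your observation that $\w_\lambda$ is perfect, which justifies the phrase ``universal central extension'', is a worthwhile addition that the paper leaves implicit.
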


\begin{proof}
  Consider the Lie algebra $\g$ generated by $\{L_n, P_n, D_n,
  I_n\}_{n\in\ZZ}$, with (nonzero) Lie brackets
  \begin{equation}
    \label{eq: lambda-planar-GCA}
    \begin{alignedat}{2}
      &[L_n, L_m] = (n-m)L_{m+n}  \quad &&[L_n, D_m] = -m D_{m+n}\\
      &[L_n, P_m] = -(m+\lambda n)P_{m+n}  \quad &&[L_n, I_m] = (n-m)L_{m+n}\\
      &[D_m, P_n] = P_{m+n} \quad &&[D_m, I_n] = -I_{m+n}.
\end{alignedat}
\end{equation}
For $\lambda=-1$, this is known as the planar galilean conformal
algebra (GCA). It was shown by Gao, Liu and Pei that the second
cohomology group of the planar GCA (with values in the trivial module
$\CC$) is 3-dimensional \cite{MR3542533}. It is easy to check that
this statement holds true for any $\lambda\in\ZZ$.

Let $\h$ denote the abelian ideal in $\g$ spanned by $\{I_n\}_{n\in\ZZ}$.
Since $\w_\lambda=\g/\h$ and $[\g, \h] = \h$, we can make use of
Lemma~\ref{lem:injective_map_in_cohomology} to deduce that the
canonical surjective homomorphism $\tilde{\pi}\colon\g \to \w_\lambda$
induces an injective map in cohomology $\tilde{\pi}_*\colon H^2(\w_\lambda)\to
H^2(\g)$.  This implies that $\dim H^2(\w_\lambda) \leq 3$.

Now let $\h$ denote the abelian ideal of $\w_\lambda$ spanned by
$\{P_n\}_{n\in\ZZ}$. Indeed, $[\w_\lambda, \h]=\h$, and we have a
surjective Lie algebra homomorphism $\pi\colon\w_\lambda \to
\w_\lambda/\h=:\g_0$ (recall that $\g_\lambda$ is given by \cref{extLL,extLP,extPP}). 
Once again, using
Lemma~\ref{lem:injective_map_in_cohomology}, the injectivity of the
induced map $\pi^*\colon H^2(\g_0)\to H^2(\w_\lambda)$ implies that $\dim
H^2(\w_\lambda)\geq \dim H^2(\g_0)$. Since $\dim H^2(\g_0) =
3$ (as shown by Arbarello, De~Concini, Kac and Procesi \cite{MR946992}, see also \cite{MR2773310}), we arrive
at the nested inequality $3\leq\dim H^2(\w_\lambda)\leq3$. Thus,
$\dim H^2(\w_\lambda)=3$ indeed.  We may now obtain the
explicit form of the representative 2-cocycles by pulling back the
three representative 2-cocycles on $\g_0$ \cite{MR946992}
\begin{equation} \label{eq: g_0 2-cocycles}
  \begin{split}
    \gamma_{1}(L_n, L_m) &= \frac{1}{12}n(n^2-1)\delta^0_{m+n}\\
    \gamma_{2}(L_n, D_m) &= \frac{1}{2}n(n+1) \delta^0_{m+n}\\
    \gamma_{3}(D_n, D_m) &= n\delta^0_{m+n}
  \end{split}  
\end{equation}
by $\pi^*$ to get $\gamma_{LL} = \pi^*(\gamma_1)$, $\gamma_{LD}
=\pi^*(\gamma_2)$ and $\gamma_{DD}=\pi^*(\gamma_3)$ as given by
\eqref{eq: Weyl lambda-BMS 2-cocycles}.\footnote{Likewise, $\tilde{\gamma}_1 = \tilde{\pi}^* \circ
  \pi^*(\gamma_1)$, $\tilde{\gamma}_2 = \tilde{\pi}^* \circ
  \pi^*(\gamma_2)$ and $\tilde{\gamma}_3 = \tilde{\pi}^* \circ
  \pi^*(\gamma_3)$ agree with the expressions for the three representative
  two-cocycles on $\g$ \cite{MR3542533}.}
\end{proof}

\begin{remark}
  One might have expected that for the values of $\lambda$ for which
  the $\lambda$-BMS algebra admits additional central extensions
  (namely, $\lambda = -1,0,1$) so   would the Weyl $\lambda$-BMS
  algebra.  This however is not the case.  For example, for $\lambda =
  -1$, corresponding to the BMS algebra, the BMS algebra admits a
  central extension $c_P$ in the bracket $[L_m,P_n]$.  This however
  has to vanish in the Weyl BMS algebra: indeed, the zero mode $D_0$
  of the super-dilatations acts diagonally with $[D_0,L_m]$ = 0 and
  $[D_0,P_m]=P_m$ and hence by Jacobi $[D_0,[L_m,P_n]] = [L_m,P_n]$
  and hence there can be no central terms in $[L_m,P_n]$.
\end{remark}

\begin{definition}
  The universal central extension $\widehat{\w}_\lambda$ of
  $\w_\lambda$ is called the \textbf{Weyl $\lambda$-BMS
    algebra}.  For $\lambda = -1$, this is the three-dimensional
  version of the Weyl BMS algebra in \cite{Freidel:2021fxf}.
\end{definition}

\subsubsection{The Weyl $\lambda$-BMS algebra in terms of OPEs}
\label{sec:weyl-lambda-bms-opes}

Next we reformulate the Lie bracket of the Weyl $\lambda$-BMS
algebra in terms of operator product expansions for the fields
\begin{equation}
  T(z) = \sum_{n\in\ZZ} L_n z^{-n-2}, \qquad D(z) = \sum_{n\in\ZZ} D_n
  z^{-n-1} \qquad\text{and}\qquad P(z) = \sum_{n\in\ZZ} P_n z^{-n-(1-\lambda)}.
\end{equation}
The operator product algebra is given by
\begin{equation}
\label{eq: lambda-Weyl-BMS OPE}
  \begin{split}
    T(z) T(w) &= \frac{\tfrac12 c_L}{(z-w)^4} + \frac{2 T(w)}{(z-w)^2} + \frac{\partial T(w)}{z-w} + \reg\\
    T(z) D(w) &= \frac{c_{TD}}{(z-w)^3} + \frac{D(w)}{(z-w)^2} + \frac{\partial D(w)}{z-w} + \reg\\
    T(z) P(w) &= \frac{(1-\lambda)P(w)}{(z-w)^2} + \frac{\partial P(w)}{z-w} + \reg\\
    D(z) P(w) &= \frac{P(w)}{z-w} + \reg\\
    D(z) D(w) &= \frac{c_D}{(z-w)^2} + \reg\\
    P(z) P(w) &= \reg,
  \end{split}
\end{equation}
where $c_L$, $c_D$ and $c_{TD}$ are the three central charges. One can
check that the above operator product expansions are associative. This
and many of the calculations below have been performed using the
Mathematica package \texttt{OPEdefs} written by Kris Thielemans
\cite{Thielemans:1991uw,Thielemans:1992mu,Thielemans:1994er}.

For further reference, we remind the reader that the operator product
expansion can be reformulated in terms of a sequence of bilinear
products indexed by the integers:
\begin{equation}
  A(z) B(w) = \sum_{n\ll\infty} \frac{[A,B]_n(w)}{(z-w)^n},
\end{equation}
where the sum is over all $n$ less than some positive
integer.  The commutativity and associativity of the operator product
expansion translate into axioms for the brackets $[-,-]_n$ which are
reminiscent to those satisfied by the bracket in a Lie algebra.  All
the operator product algebras in this paper are conformal, so that
there is always a field $T(z)$ satisfying the operator product
expansion of the Virasoro algebra with some central charge (as in the
first operator product expansion in equation~\eqref{eq:
  lambda-Weyl-BMS OPE}).  We assume that all other fields have a
well-defined conformal weight, so that for any field $\Phi(z)$,
$[T,\Phi]_2 = h \Phi$, where $h$ is the conformal weight.  If $A(z)$
and $B(z)$ have conformal weights $h_A$ and $h_B$, respectively, we
expand them in modes according to
\begin{equation}
  A(z) = \sum_n A_n z^{-n-h_A} \qquad\text{and}\qquad B(z) = \sum_n
  B_n z^{-n-h_B}
\end{equation}
and the Lie algebra of modes can be read off from the singular part of
the operator product expansion via the formula
\begin{equation}
  [A_n,B_m] = \sum_{\ell \geq 1}
      \binom{n+h_A-1}{\ell -1} \left( [A,B]_\ell \right)_{m+n},
\end{equation}
with $\binom{n}{k} = \frac{n!}{k! (n-k)!}$ the usual binomial
coefficient.

\subsection{The BRST complex of the Weyl $\lambda$-BMS algebra}
\label{sec:brst-complex-Weyl-BMS}

We now construct the BRST complex for the Weyl $\lambda$-BMS algebra
above.  Its cohomology, which is the semi-infinite cohomology
of the $\lambda$-BMS algebra relative to the centre and with values in an
admissible representation, can be interpreted as the spectrum of (a
chiral sector of) a putative Weyl $\lambda$-BMS string, as was first
observed for the bosonic string in \cite{MR865483}.

\subsubsection{The ghosts BC systems}
\label{sec:ghosts-bc-systems}

The ghosts are described by fermionic BC systems $(b_i,c_i)$ for
$i=1,2,3$  with conformal weights $(2,-1)$, $(1,0)$ and
$(1-\lambda,\lambda)$, respectively.  Their operator product
expansions are the standard ones as in 
\begin{equation}
  \label{eq:BC-systems}
  b_i(z)c_j(w) = \frac{\delta_{ij}}{z-w} + \reg, \qquad b_i(z)b_j(w)
  = \reg \qquad\text{and}\qquad c_i(z) c_j(w) = \reg.
\end{equation}
The Virasoro element is given by
\begin{equation}
  \label{eq:Tgh}
  T^{\text{gh}} = -2 (b_1 \partial c_1) - (\partial b_1 c_1) - (b_2
  \partial c_2) - (1-\lambda) (b_3 \partial c_3) + \lambda (\partial b_3 c_3),
\end{equation}
where parentheses indicate the normal-ordered product, which
associates to the left so that $(ABC) := (A(BC))$, et cetera.  We
define the ghost number as usual by declaring $b_i$ to have
ghost number $-1$ and $c_i$ ghost number $+1$.

\subsubsection{The ghost Weyl $\lambda$-BMS algebra}
\label{sec:ghost-weyl-bms_l}

There is an embedding of the Weyl $\lambda$-BMS algebra in the
operator product algebra of the ghost BC systems.  The Virasoro
element is given by $T^{\text{gh}}$ in equation~\eqref{eq:Tgh}, which
results in a central charge
\begin{equation}
  \label{eq:cLgh}
  c^{\text{gh}}_L = -6 (5 - 2\lambda + 2 \lambda^2).
\end{equation}
One can find expressions for $D^{\text{gh}}$ and $P^{\text{gh}}$ after
some trial and error\footnote{Alternatively, one can follow the
  prescription given in \cite[Section~3]{Figueroa-OFarrill:2024wgs}
  for $\g_\lambda$ and apply it to $\w_\lambda$.} and this leads to
\begin{equation}
  \label{eq:Dgh}
  D^{\text{gh}} = (b_3 c_3) + \partial(c_1 b_2)
\end{equation}
which results in central charges
\begin{equation}
  c^{\text{gh}}_{TD} = 1- 2\lambda \qquad\text{and}\qquad
  {c_D}^{\text{gh}}= 1.
\end{equation}
Finally we find
\begin{equation}
  \label{eq:Mgh}
  P^{\text{gh}} = (c_1\partial b_3) + (c_2 b_3) + (1-\lambda) (\partial c_1 b_3).
\end{equation}
One finds that $T^{\text{gh}}$, $D^{\text{gh}}$ and $P^{\text{gh}}$
obey the Weyl $\lambda$-BMS algebra with the above values for the
central charges.  It should be mentioned that this is not the unique
embedding of the Weyl $\lambda$-BMS algebra in the operator product
algebra of the ghosts, but it is the one induced by the BRST
differential to be introduced presently.

\subsubsection{The BRST current}
\label{sec:brst-current}

The BRST current $J$ is a conformal weight $1$ and ghost number $1$
field of the form
\begin{equation}
  \label{eq:jBRST}
  J = (c_1 T) + (c_2 D) + (c_3 P)+ \cdots
\end{equation}
where $T$, $D$ and $P$ are a representation of the Weyl
$\lambda$-BMS algebra with opposite central charges:
\begin{equation}
  c_L = - c_L^{\text{gh}} = 6 (5 - 2\lambda + 2 \lambda^2), \qquad c_{TD}  = -c_{TD}^{\text{gh}} = 2\lambda -1 \qquad\text{and}\qquad c_D = -
  c_D^{\text{gh}}= -1.
\end{equation}
The fundamental property of $J$ is that its zero mode $d$ is a
differential: $d^2 = 0$, where the action of $d$ is given by the first
order pole of the operator product expansion with $J$; that is, $d =
[J,-]_1$.  By a result of Füsun Akman \cite{MR1250535}, $d$ is a
differential if and only if $T^{\text{tot}} := db_1$, $D^{\text{tot}}
:= db_2$  and $P^{\text{tot}} := db_3$ obey the centreless Weyl $\lambda$-BMS
algebra.  The BRST current is of course only defined up to the
addition of a total derivative, since that does not change its zero mode.

Some experimentation results in the following expression for the BRST
current:
\begin{equation}
  J = (c_1 T) + (c_2 D) + (c_3 P) + (b_1c_1\partial c_1) + (b_2 c_1 \partial c_2) + (b_3 c_1 \partial c_3) - \lambda (b_3 c_3 \partial c_1) + (c_2 b_3 c_3),
\end{equation}
which up to a total derivative takes the more usual form
\begin{equation}
\label{eq: BRST current semi-infinite REP}
  J' = (c_1 T) + (c_2 D) + (c_3 P) + \tfrac12 (c_1 T^{\text{gh}}) + \tfrac12
  (c_2 D^{\text{gh}}) + \tfrac12 (c_3 P^{\text{gh}}).
\end{equation}
One checks that $[J',J']_1$ is indeed a total derivative and that
$T^{\text{tot}} = [J',b_1]_1 = T + T^{\text{gh}}$, $D^{\text{tot}} =
[J',b_2]_1 = D + D^{\text{gh}}$  and $P^{\text{tot}} = [J',b_3]_1 = P
+ P^{\text{gh}}$ indeed give a representation of the centreless Weyl
$\lambda$-BMS algebra.

The form of the BRST current \eqref{eq: BRST current semi-infinite
  REP} indicates that the fields $T^{\text{gh}}$, $D^{\text{gh}}$ and
$P^{\text{gh}}$ are indeed generating functionals formed from the
semi-infinite wedge (i.e., fermionic Fock) representation of the Weyl
$\lambda$-BMS algebra.

\subsection{Quasi-isomorphism with a twisted $N{=}2$ superconformal algebra}
\label{sec:n=2SCA}

The BRST cohomology is not just a graded vector space, but admits a
richer algebraic structure, first formalised in the context of the
bosonic string in \cite{Lian:1992mn} based on initial observations of
the BRST cohomology of noncritical bosonic strings in
\cite{Witten:1992yj}.  The relevant structure is that of a
Batalin--Vilkovisky (BV) algebra: a special kind of Gerstenhaber algebra
where the bracket measures the failure of a second order operator
(here the Virasoro antighost zero mode) being a derivation over the
normal ordered product.  We refer to \cite{Lian:1992mn} for the
relevant facts and definitions.

A paradigmatic example of BV algebra is provided by the chiral ring of a
topologically twisted $N=2$ superconformal algebra
\cite{Witten:1988ze,Eguchi:1990vz}.  The twisting gives the two
supercharges $G^\pm$, originally of conformal weight $\frac32$,
conformal weights $1$ and $2$.  The zero mode of the supercharge with
conformal weight $1$ plays the rôle of the BRST differential and the
zero mode of the supercharge with conformal weight $2$ plays the rôle
of the Virasoro antighost zero mode.

As the case of the bosonic string theory shows, not every topological
conformal field theory is of this form.  Indeed, this is typical of
most string theories, with the exception of the $N=2$ string itself
\cite{Gomis:1991ue,Giveon:1993ew}.  Nevertheless, exploiting the
embedding \cite{Berkovits:1993xq} of the $N=1$ NSR string into the
$N=2$ string, it was shown in \cite{Marcus:1994nd}, that the BRST
cohomology of the $N=1$ string is isomorphic to the chiral ring of an
$N=2$ superconformal field theory.  This then suggested how to prove
the same result for other string theories
\cite{Figueroa-OFarrill:1995qkv},
resulting in a natural conjecture that the BRST cohomology of any string theory and, more
generally, of any two-dimensional topological conformal field theory
is isomorphic to the chiral ring of some twisted $N=2$ superconformal
field theory
\cite{Figueroa-OFarrill:1995agp,Figueroa-OFarrill:1996dic}.

In this section we will test this conjecture (and show that it holds)
for the BRST cohomology of the Weyl $\lambda$-BMS algebra.

To do this we will embed into the tensor product of the BRST complex
of the Weyl $\lambda$-BMS algebra with a Koszul topological conformal
algebra a topologically twisted $N{=}2$ superconformal algebra.  This
embedding will turn out to be quasi-isomorphic, in that the BRST
cohomology of the Weyl $\lambda$-BMS algebra will be isomorphic (as a
BV algebra) to the chiral ring of the $N{=}2$ superconformal algebra.
This gives further evidence to the conjecture in
\cite{Figueroa-OFarrill:1995qkv,Figueroa-OFarrill:1995agp,Figueroa-OFarrill:1996dic}.

We start by modifying the BRST current by a total derivative in order
to simplify its operator product.  Let us define
\begin{equation}
  \begin{split}
    \GG^+_{\text{W}}&:= J + \partial \left( (c_1 b_2 c_2) +  (1 + \lambda) (c_1 b_3 c_3) + c_2 + \tfrac12 (7-4\lambda) \partial c_1\right)\\
    \GG^-_{\text{W}} &:= b_1\\
    \JJ_{\text{W}} &:= - (b_1c_1) - (b_2 c_2) - (b_3 c_3)\\
    \TT_{\text{W}} &:= T^{\text{tot}}.
  \end{split}
\end{equation}
The resulting operator product expansions are
\begin{equation}
  \begin{split}
    \GG^+_{\text{W}}(z) \GG^+_{\text{W}}(w) &= \frac{(\lambda-1) \partial(c_1\partial^2 c_1)(w)}{z-w} + \reg\\
    \GG^+_{\text{W}}(z) \GG^-_{\text{W}}(w) &= \frac{(7-4\lambda)}{(z-w)^3} + \frac{\JJ_{\text{W}}(w)}{(z-w)^2} + \frac{\TT_{\text{W}}(w)}{z-w} + \reg\\
    \GG^-_{\text{W}}(z) \GG^-_{\text{W}}(w) &= \reg\\
    \TT_{\text{W}}(z) \JJ_{\text{W}}(w) &=\frac{(2\lambda-5)}{(z-w)^3} + \frac{\JJ_{\text{W}}(w)}{(z-w)^2} + \frac{\partial\JJ_{\text{W}}(w)}{z-w} + \reg\\
    \TT_{\text{W}}(z) \GG^+_{\text{W}}(w)&= \frac{6(\lambda-1)c_1(w)}{(z-w)^4} + \frac{2(\lambda-1)\partial c_1(w)}{(z-w)^3} + \frac{\GG^+_{\text{W}}(w)}{(z-w)^2} + \frac{\partial\GG^+_{\text{W}}(w)}{z-w} + \reg\\
    \TT_{\text{W}}(z) \GG^-_{\text{W}}(w) &= \frac{2 \GG^-_{\text{W}}(w)}{(z-w)^2} + \frac{\partial\GG^-_{\text{W}}(w)}{z-w} + \reg\\
    \JJ_{\text{W}}(z) \JJ_{\text{W}}(w) &= \frac{3}{(z-w)^2}\\
    \JJ_{\text{W}}(z) \GG^+_{\text{W}}(w) &= \frac{6(1-\lambda)c_1(w)}{(z-w)^3} + \frac{4(1-\lambda)\partial c_1(w)}{(z-w)^2} + \frac{\GG^+_{\text{W}}(w)}{z-w} + \reg\\
    \JJ_{\text{W}}(z) \GG^-_{\text{W}}(w) &= \frac{-\GG^-_{\text{W}}(w)}{z-w} + \reg,
  \end{split}
\end{equation}
which are clearly not those of a topologically twisted $N=2$
superconformal algebra.  For example, $\GG^+_{\text{W}}$ does not have
regular operator product expansion with itself.

\subsubsection{The Koszul topological conformal algebra}
\label{sec:kosz-topol-conf}

We now tensor with a Koszul topological conformal algebra consisting
of one fermionic BC system $(b,c)$ and one bosonic BC system
$(\beta,\gamma)$, both of conformal weights $(1-\mu,\mu)$, with basic
operator product expansions:
\begin{equation}
  b(z) c(w) = \frac{1}{z-w} + \reg \qquad\text{and}\qquad \beta(z)
  \gamma(w) = \frac{1}{z-w} + \reg.
\end{equation}
The Koszul topological conformal algebra embeds a twisted $N{=}2$
superconformal algebra given by the following fields:
\begin{equation}\label{eq:koszul-neq2}
  \begin{split}
    \GG^+_{\text{K}}&:= (b \gamma)\\
    \GG^-_{\text{K}} &:= (1-\mu) (\partial c \beta) - \mu (c \partial\beta)\\
    \JJ_{\text{K}} &:= \mu (b c) + (1-\mu) (\beta\gamma)\\
    \TT_{\text{K}} &:= (1-\mu) (\beta\partial\gamma) - \mu (\partial\beta\gamma) -
    (1-\mu) (b \partial c) + \mu (\partial b c).
  \end{split}
\end{equation}
These fields obey a twisted $N{=}2$ superconformal algebra on the nose:
\begin{equation}
  \begin{split}
    \GG^\pm_{\text{K}}(z) \GG^\pm_{\text{K}}(w) &= \reg\\
    \GG^+_{\text{K}}(z) \GG^-_{\text{K}}(w) &= \frac{(2\mu-1)}{(z-w)^3} + \frac{\JJ_{\text{K}}(w)}{(z-w)^2} + \frac{\TT_{\text{K}}(w)}{z-w} + \reg\\
    \JJ_{\text{K}}(z) \GG^\pm_{\text{K}}(w) &= \frac{\pm\GG^\pm_{\text{K}}(w)}{z-w} + \reg,
  \end{split}
\end{equation}
from which the remaining operator product expansions follow by associativity, as shown independently in \cite{Figueroa-OFarrill:1993lqr} and \cite{Getzler:1993py}:
\begin{equation}
  \begin{split}
    \JJ_{\text{K}}(z) \JJ_{\text{K}}(w) &= \frac{(2\mu-1)}{(z-w)^2}\\
    \TT_{\text{K}}(z) \JJ_{\text{K}}(w) &=\frac{(1-2\mu)}{(z-w)^3} + \frac{\JJ_{\text{K}}(w)}{(z-w)^2} + \frac{\partial\JJ_{\text{K}}(w)}{z-w} + \reg\\
    \TT_{\text{K}}(z) \GG^+_{\text{K}}(w)&= \frac{\GG^+_{\text{K}}(w)}{(z-w)^2} + \frac{\partial\GG^+_{\text{K}}(w)}{z-w} + \reg\\
    \TT_{\text{K}}(z) \GG^-_{\text{K}}(w) &= \frac{2 \GG^-_{\text{K}}(w)}{(z-w)^2} + \frac{\partial\GG^-_{\text{K}}(w)}{z-w} + \reg\\
    \TT_{\text{K}}(z) \TT_{\text{K}}(w) &= \frac{2 \TT_{\text{K}}(w)}{(z-w)^2} + \frac{\partial \TT_{\text{K}}(w)}{z-w} + \reg.
  \end{split}
\end{equation}

\subsubsection{A twisted $N{=}2$ superconformal algebra}
\label{sec:twist-n=2-superc}

Now let
\begin{equation}
  X := (1-\lambda ) \left( (\partial c_1 c_1 c\beta) + (c_1 \beta\gamma) -
    (c_1 b c) - \partial c_1 \right)
\end{equation}
and define the following fields
\begin{equation}
  \begin{split}
    \GG^+ &:= \GG^+_{\text{W}} + \GG^+_{\text{K}} + \partial X\\
    \GG^- &:= \GG^-_{\text{W}} + \GG^-_{\text{K}}.
  \end{split}
\end{equation}
It follows by calculation that
\begin{equation}
  \begin{split}
    \GG^\pm(z) \GG^\pm(w) &= \reg\\
    \GG^+(z) \GG^-(w) &= \frac{2 (2+\mu-\lambda)}{(z-w)^3} +
    \frac{\JJ(w)}{(z-w)^2} + \frac{\TT(w)}{z-w} + \reg,
  \end{split}
\end{equation}
which defines $\JJ$ and $\TT$:
\begin{equation}
  \begin{split}
    \JJ &= \JJ_{\text{W}} + \JJ_{\text{K}} + (1 - \lambda) \left( (b
      c) - (\beta\gamma) + \partial (c_1 c \beta)  \right)\\
    \TT &= \TT_{\text{W}} + \TT_{\text{K}}.
  \end{split}
\end{equation}
Another calculation shows that
\begin{equation}
  \JJ(z) \GG^\pm(w) = \frac{\pm\GG^\pm(w)}{z-w} + \reg,
\end{equation}
from which the other operator product expansions of the twisted
$N{=}2$ superconformal algebra follow by associativity:
\begin{equation}
  \begin{split}
    \JJ(z) \JJ(w) &= \frac{2(2+\mu-\lambda)}{(z-w)^2} + \reg \\
    \TT(z) \JJ(w) &= \frac{-2(2+\mu-\lambda)}{(z-w)^3} + \frac{\JJ(w)}{(z-w)^2} + \frac{\partial\JJ(w)}{z-w} + \reg \\
    \TT(z) \GG^+(w) &= \frac{\GG^+(w)}{(z-w)^2} + \frac{\partial\GG^+(w)}{z-w} + \reg \\
    \TT(z) \GG^-(w) &= \frac{2\GG^-(w)}{(z-w)^2} + \frac{\partial\GG^-(w)}{z-w} + \reg \\
    \TT(z) \TT(w) &= \frac{2\TT(w)}{(z-w)^2} + \frac{\partial\TT(w)}{z-w} + \reg.
  \end{split}
\end{equation}
Notice that by choosing $\mu$ appropriately, we can bring the central
charges to any desired values, e.g., if $\mu = \lambda - 2$, then all
central terms vanish.  Notice that since $\GG^+$ differs from
$\GG^+_{\text{W}} + \GG^+_{\text{K}}$ by a total derivative, the
$N{=}2$ differential $d_{N{=}2}=[\GG^+,-]_1$ is the sum of the Weyl
$\lambda$-BMS and Koszul differentials and hence we may apply the
Künneth theorem to deduce that the chiral ring of the $N{=}2$
superconformal algebra is isomorphic to the graded tensor product of
the cohomology of the Weyl $\lambda$-BMS differential $d$ and the
cohomology of the Koszul differential
$d_{\text{K}} = [\GG^+_{\text{K}},-]_1$.  Since (once we choose a
picture for the $\beta\gamma$ system) the latter cohomology is trivial
except in degree $0$ and isomorphic to $\CC$ there, we obtain that the
chiral ring is isomorphic to the BRST cohomology of the Weyl
$\lambda$-BMS algebra.

It is also the case that the isomorphism is one of BV algebras.
Indeed, that the BRST cohomology of the Weyl $\lambda$-BMS algebra
admits the structure of a BV algebra follows from results in
\cite{Figueroa-OFarrill:1995qkv,MR1466615}, which guarantee this is
the case simply because the Virasoro antighost is a conformal primary
with weight $2$.  In the BRST cohomology of the Weyl $\lambda$-BMS
algebra, the BV differential is given by the zero mode of the Virasoro
antighost ($\GG^-_{\text{W}}$), whereas in the case of the
topologically $N=2$ superconformal algebra it is given by the zero
mode of $\GG^- = \GG^-_{\text{W}} + \GG^-_{\text{K}}$, which acts the
same way in cohomology, since $\GG^-_{\text{K}}$ acts trivially on the
Kozul cohomology.

\section{The conformal BMS W-algebra}
\label{sec:conformal-bms-w}

Consider again the Weyl BMS algebra, which is the Weyl $\lambda$-BMS
algebra with $\lambda = -1$.  There is no way to extend it to a Lie
algebra by the addition of a field $K(z)$ of conformal weight $2$ in
such a way that the operator expansion $K(z)P(w)$ is nonzero, but as
shown in \cite{Fuentealba:2020zkf}, such an extension exists as a
W-algebra.  In this section we show that contrary to many of the
W-algebras which have been studied in this context, this one does not
admit a BRST complex and hence there is no natural notion of W-strings
for it.

\subsection{The W-algebra}
\label{sec:w-algebra}

We consider the VOA generated by fields $T, D, K, P$ with the
following operator product expansions:
\begin{equation}
  \label{eq:conf-BMS-W-OPEs}
  \begin{split}
    T(z) T(w) &= \frac{\tfrac12 c_L}{(z-w)^4} + \frac{2 T(w)}{(z-w)^2} + \frac{\partial T(w)}{z-w} + \reg\\
    T(z) D(w) &= \frac{D(w)}{(z-w)^2} + \frac{\partial D(w)}{z-w} + \reg\\
    T(z) P(w) &= \frac{2 P(w)}{(z-w)^2} + \frac{\partial P(w)}{z-w} + \reg\\
    T(z) K(w) &= \frac{2 K(w)}{(z-w)^2} + \frac{\partial K(w)}{z-w} + \reg\\
    D(z) D(w) &= \frac{c_D}{(z-w)^2} + \reg\\
    D(z) K(w) &= -\frac{K(w)}{z-w} + \reg\\
    D(z) P(w) &= \frac{P(w)}{z-w} + \reg\\
    K(z) K(w) &= \reg\\
    P(z) P(w) &= \reg,
  \end{split}
\end{equation}
and the operator product expansion $K(z) P(w)$ has the following
singular terms
\begin{equation}
  K(z)P(w) = \sum_{\ell=1}^4 \frac{[K,P]_\ell(w)}{(z-w)^\ell}
\end{equation}
with
\begin{equation}
  \begin{split}
    [K,P]_4 &= \frac{3(c_D-1) c_D^2}{1+c_D}\\
    [K,P]_3 &= \frac{3(1-c_D)c_D}{1+c_D} D\\
    [K,P]_2 &= -c_D T + \frac{3(1-c_D)c_D}{2(1+c_D)} \partial D + \frac{2 c_D-1}{1+c_D} D^2\\
    [K,P]_1 &= -\tfrac12 c_D \partial T - \frac{1+c_D^2}{2(1+c_D)} \partial^2 D + \frac{2c_D -1}{1+c_D} D\partial D + TD - \frac{1}{1+c_D} D^3.
  \end{split}
\end{equation}
Associativity of the operator product forbids any other central
charges and even forces the Virasoro central charge to be
\begin{equation}
  c_L = \frac{-2(6 c_D^2 - 8 c_D +1)}{1+c_D}.
\end{equation}

The structure of the above operator product algebra can be understood
a bit better if we define the following primary fields:
\begin{equation}
  \begin{split}
    \Phi_2 &= \frac{2c_D - 1}{1+c_D} \left( D^2 - \frac{2c_D}{c_L} T  \right) \\
    \Phi_3 &= -\frac{1}{1+c_D} D^3 + \frac{1}{3-2c_D} \left( TD - \tfrac12 \partial^2 D \right),
  \end{split}
\end{equation}
of conformal weights $2$ and $3$, respectively.  Then the operator
product expansion $K(z)P(w)$ simply contains the conformal families of
the identity, $D$, $\Phi_2$ and $\Phi_3$ with coefficients
\begin{equation}
  K(z) P(w)= \frac{3 c_D^ 2(c_D-1)}{1 + c_D} \frac{[\mathbb{1}](w)}{(z-w)^4} + \frac{3 c_D(1-c_D)}{1+c_D} \frac{[D](w)}{(z-w)^3} + \frac{[\Phi_2](w)}{(z-w)^2} + \frac{[\Phi_3](w)}{z-w} + \reg,
\end{equation}
where for a primary field $\phi$, the notation $[\phi]$ stands for its
conformal family.  Explicitly, in the above equation and up to regular
terms, we have
\begin{equation}
  \begin{split}
    [\mathbb{1}](w) &= \mathbb{1} + \frac{4}{c_L} (z-w)^2 T(w) + \frac{2}{c_L}(z-w)^3 \partial T(w)\\
    [D](w) &= D(w) + \tfrac12 (z-w) \partial D(w) + \frac{2(1+c_D)}{3 c_D(3-2c_D)} (z-w)^2 \left( (TD)(w) -\frac{2c_D^2 - c_D + 2}{4(1+c_D)} \partial^2 D(w)\right)\\
    [\Phi_2](w) &= \Phi_2(w) + \tfrac12 (z-w) \partial \Phi_2(w)\\
    [\Phi_3](w) &= \Phi_3(w).
  \end{split}
\end{equation}

\subsection{Non-existence of BRST complex}
\label{sec:non-existence-brst}

The proof of non-existence of the BRST complex for the above W-algebra
is computational, but we will give some details setting up the
calculation and then explain the result.

We have four quasiprimary fields in the W-algebra: $T,D,P,K$ of
conformal weights $2,1,2,2$, respectively.  We introduce fermionic
ghost systems $(b_i,c_i)$ for $i=1,2,3,4$ of weights $(2,-1)$ for
$i=1,3,4$ and weights $(1,0)$ for $i=2$.  As usual we assign ghost
numbers $1$ to the $c_i$ and $-1$ to the $b_i$.  The putative BRST
current has ghost number $1$ and conformal weight $1$ and takes the
form
\begin{equation}
  J = c_1 T + c_2 D + c_3 P + c_4 K + \cdots
\end{equation}
where $\cdots$ refers to any terms of with one or more antighosts
$b_i$.  Our methodology is naive.  We write the most general $J$ of
the above form and of conformal weight $1$ and ghost number $1$ and
demand that $d^2 = 0$, with $d := [J,-]_1$ its zero mode.   The
calculations have been performed in Mathematica on a 2020 MacBook Pro
laptop with a 2.3 GHz Quad-Core Intel Core i7 processor and 16Gb of
RAM, using the package \texttt{OPEdefs} (version 3.1 beta 4) written
by Kris Thielemans
\cite{Thielemans:1991uw,Thielemans:1992mu,Thielemans:1994er}.  A
notebook is available upon request.

Table~\ref{tab:ingr-bcc-curr} lists the ingredients out of which we
may write the terms in $J$ of the form $BC^2$, along with their
conformal weights.  This and the following table is to be supplemented
with the following  table of the fields of low conformal weight made
out of the generators of the original W-algebra:
\begin{equation*}
  \begin{tabular}{>{$}c<{$}|>{$}l<{$}}
    \omega_X & \text{fields}\\\toprule
    0 & \1\\
    1 & D\\
    2 & T,P,K,D^2,\partial D.
    \end{tabular}
\end{equation*}
All the $BC^2$ terms are obtained by picking one term from each table
and ensuring that the sum of the conformal weights
$\omega_B + \omega_{C^2} + \omega_X = 1$. It is easy to see that the
possible triples $(\omega_B, \omega_{C^2}, \omega_X)$ of conformal
weights are $(1,-2,2)_{15}$, $(1,-1,1)_{12}$, $(1,0,0)_{18}$,
$(2,-2,1)_{12}$, $(2,-1,0)_{48}$ and $(3,-2,0)_{12}$, where the
subscript is the multiplicity.  This means there are $117$ such terms,
which despite being easy to enumerate, we will refrain from doing so
here.

\begin{table}[h!]
\caption{Ingredients of $BC^2$ terms with their conformal weights}
  \label{tab:ingr-bcc-curr}
  \begin{minipage}[t]{0.2\linewidth}
    \centering
    \begin{tabular}{>{$}c<{$}|>{$}l<{$}}
      \omega_B & \text{fields}\\\toprule
      1 & b_2\\
      2 & b_1,\partial b_2, b_3, b_4\\
      3 & \partial b_1,\partial^2 b_2, \partial b_3, \partial b_4\\
    \end{tabular}
  \end{minipage}
  \begin{minipage}[t]{0.75\linewidth}
    \centering
    \begin{tabular}{>{$}c<{$}|>{$}l<{$}}
      \omega_{C^2} & \text{fields}\\\toprule 
    -2 & c_1 c_3, c_1 c_4, c_3 c_4\\
    -1 & c_1 \dc_3, \dc_1 c_3, c_1 \dc_4, \dc_1 c_4, c_3 \dc_4, \dc_3 c_4,\\
       & c_1 \dc_1, c_3 \dc_3, c_4 \dc_4, c_1 c_2, c_2 c_3, c_2 c_4\\
     0 & c_1 \ddc_3, \dc_1 \dc_3, \ddc_1 c_3, c_1 \ddc_4, \dc_1 \dc_4,\\
       & \ddc_1 c_4, c_3 \ddc_4, \dc_3 \dc_4, \ddc_3 c_4, c_1 \ddc_1\\
       & c_3 \ddc_3, c_4\ddc_4, \dc_1 c_2, c_1 \dc_2, \dc_2 c_3,\\
       & c_2 \dc_3, \dc_2 c_4, c_2 \dc_4\\\bottomrule
    \end{tabular}
  \end{minipage}
\end{table}

Table~\ref{tab:ingr-bbccc-curr} lists the ingredients in terms of the
form $B^2C^3$ along with their conformal weights.  Again all terms are
obtained by picking one term from each table and ensuring that the sum
of the conformal weights $\omega_{B^2} + \omega_{C^3} + \omega_X = 1$.
It is again easy to see that the possible triples
$(\omega_{B^2}, \omega_{C^3}, \omega_X)$ of conformal weights are
$(4,-3,0)_{10}$, $(3,-3,1)_4$ and $(3,-2,0)_{48}$ for a total of $62$,
which we will also refrain from listing.  There are no terms with
three (or more) antighosts because the conformal weight of any term of
the form $B^3C^4X$ is bounded below by $2$ and this only increases
with terms with higher number of antighosts.  In total there are $179$
possible terms of ghost number 1 and conformal weight 1 we could add
to the BRST current.

\begin{table}[h!]
\caption{Ingredients of $B^2C^3$ terms with their conformal weights}
\label{tab:ingr-bbccc-curr}
  \begin{minipage}[t]{0.2\linewidth}
    \centering
    \begin{tabular}{>{$}c<{$}|>{$}l<{$}}
      \omega_{B^2} & \text{fields}\\\toprule 
      3 & b_1 b_2, b_2 b_3, b_2 b_4, b_2 \db_2\\
      4 & b_1 b_3, b_1 b_4, b_3 b_4, b_2 \ddb_2,\\
        & \db_1 b_2, b_1 \db_2, \db_2 b_3,\\
        & b_2 \db_3, \db_2 b_4, b_2 \db_4 \\\bottomrule
    \end{tabular}
  \end{minipage}
  \begin{minipage}[t]{0.75\linewidth}
    \centering
    \begin{tabular}{>{$}c<{$}|>{$}l<{$}}
      \omega_{C^3} & \text{fields}\\\toprule 
    -3 & c_1 c_3 c_4\\
    -2 & c_1 c_2 c_3, c_1 c_2 c_4, c_2 c_3 c_4, c_1 \dc_1 c_3,\\
       &  c_1 \dc_1 c_4, c_1 c_3 \dc_3, c_1 c_4 \dc_4, c_3 \dc_3 c_4,\\
       & c_3 c_4 \dc_4,\dc_1 c_3 c_4, c_1 \dc_3 c_4, c_1 c_3 \dc_4\\
      \bottomrule
    \end{tabular}
  \end{minipage}
\end{table}

Given the most general $J$, depending on $179$ parameters, we
calculate the first-order pole $[J,J]_1$ in the operator product
expansion of the putative BRST current with itself.  The equations are
then $[[J,J]_1,\phi]_1=0$ for $\phi$ one of the generating fields of the
VOA: $b_i$, $c_i$, $T$, $D$, $P$ and $K$.  This results in $3288$
equations, which admit no solutions.  The basic reason is the
following.  There are two central charges in the W-algebra: the
Virasoro central charge $c_L$ and that of the field $D$, denoted
$c_D$.  The existence of the BRST complex requires both of them
to be critical, by which we mean that the values of $c_L$ and $c_D$
should cancel the ones of the ghost representation.  However these
central charges are not independent: associativity of the operator
product expansion relates them:
\begin{equation}
  c_L = \frac{-2 (1 - 8 c_D + 6 c_D^2)}{1 + c_D}.
\end{equation}
The critical values are $c_L= 80$ and $c_D = -2$, which do not satisfy
the above equation.  Indeed, when $c_D=-2$, one finds that $c_L=82$.
(It is intriguing that the excess is small and integral.)

This result is perhaps surprising given our experience with other
W-algebras.  Deformable W-algebras, those which exist for generic
values of the Virasoro central charge $c_L$, are typically constructed
via Drinfel'd--Sokolov reduction \cite{MR0760998,deBoer:1993iz}.  The
starting point of this reduction is the affine Kac--Moody algebra
associated to the vacuum-preserving subalgebra $\g$ (a contraction
$c_L\to \infty$ of the algebra of the vacuum-preserving modes)
\cite{Bowcock:1991zk}.  The vacuum-preserving Virasoro modes
$L_{\pm 1},L_0$ define an
$\mathfrak{sl}(2,\RR) \cong \mathfrak{so}(2,1)$ subalgebra of $\g$ and
these are in bijective correspondence with adjoint orbits of nilpotent
elements in $\g$, at least for $\g$ semisimple, which is typically the
case.  The best known W-algebras are associated to the principal
nilpotent orbits, those with the smallest stabiliser, namely
$\mathfrak{so}(2,1)$ itself.  Perhaps the best known such example is
the W${}_3$ algebra \cite{Zamolodchikov:1985wn}, whose BRST
differential was constructed in \cite{Thierry-Mieg:1987crv} and whose
cohomology was studied in detail in \cite{Bouwknegt:1995vx}.  Those
W-algebras always admit a BRST complex and indeed a reasonable notion
of semi-infinite cohomology \cite{MR4215746}, as is the case for Lie
algebras \cite{MR0740035}.  We also have constructions of BRST
complexes for W-algebras associated with minimal nilpotent orbits, as
in \cite{Bershadsky:1990bg}, which constructs a W-algebra out of the
minimal nilpotent orbit of $\mathfrak{sl}(3,\RR)$; although we are not
aware of any general result for the existence of a semi-infinite
cohomology theory for the minimal orbits.  The conformal BMS W-algebra
is one of four W-algebras which can be obtained by Drinfel'd--Sokolov
reduction of the three-dimensional conformal algebra
$\mathfrak{so}(3,2)$ \cite{Gupta:2023fmp}, but it is not the W-algebra
associated to the principal nilpotent orbit (whose BRST differential
was constructed in \cite{Zhu:1993mb}), nor indeed to the minimal
nilpotent orbit (with stabiliser
$\mathfrak{so}(2,2) \cong \mathfrak{so}(2,1) \oplus
\mathfrak{so}(2,1)$), but to an intermediate nilpotent orbit with
stabiliser $\mathfrak{so}(2,1) \oplus \mathfrak{so}(1,1)$.  (There is
another intermediate orbit with stabiliser
$\mathfrak{so}(2,1) \oplus \mathfrak{so}(2)$ which can also be shown
to lack a BRST complex.)  There are, to our knowledge, no theorems
about the existence of a semi-infinite differential for such
W-algebras and this example suggests that perhaps we should not expect
them to exist.

\section{Conclusions and outlook}
\label{sec:conclusions-outlook}

We have constructed the Weyl $\lambda$-BMS algebra in three
dimensions: an extension of the three-dimensional Lorentz algebra
$\so(2,1)$ by super-translations, super-rotations and super-dilatations,
which agrees for $\lambda =-1$ with the Weyl--BMS algebra in the
literature.  We construct this algebra out of the Fourier modes
of a free massless Klein--Gordon field, a construction in which the
quadratic Casimir of $\so(2,1)$ plays a crucial rôle.  We then
reformulate the Weyl $\lambda$-BMS algebra in terms of operator
product expansions and show that it admits a three-parameter family of
central extensions.  We construct the BRST complex for critical values
of the three central charges and show that the BRST cohomology is
isomorphic to the chiral ring of a topologically twisted $N=2$
superconformal field theory.  Returning to the classical case of
$\lambda=-1$, we argue that there is no ``conformal'' BMS Lie algebra
obtained by further extending the Weyl--BMS Lie algebra by super
special-conformal transformations.  There exists a classical conformal
BMS W-algebra which we fully quantise in the language of operator
product expansions and argue that the resulting W-algebra does not
admit a BRST complex.

Some of these results can be extended to define a Weyl $\lambda$-BMS
Lie superalgebra and a fully quantum conformal BMS W-superalgebra and
we will report on this in future work.

An interesting question we do not know the answer to is whether the
conformal BMS W-algebra admits a canonical realisation.

Another interesting question is whether there exist string sigma
models with gauge algebra given by the Weyl--BMS Lie algebra.  It is
worth remarking that the flat ambitwistor string \cite{MasonSkinner}
gives a realisation of the Weyl--BMS Lie algebra.  Indeed, if we let
$(X^\mu, \Pi_\mu)$ be $d+1$ bosonic BC systems with conformal weights
$(0,1)$ which describe the flat ambitwistor string, then
\begin{equation}
  T = - \partial X^\mu \Pi_\mu, \qquad D = \tfrac12 X^\mu \Pi_\mu
  \qquad\text{and}\qquad P = \tfrac12 \eta^{\mu\nu} \Pi_\mu \Pi_\nu
\end{equation}
provide a realisation of the Weyl--BMS Lie algebra with central
charges $c_L = 2(d+1)$, $c_D = -\tfrac14 (d+1)$ and $c_{TD} =
-\tfrac12(d+1)$.  The critical values of the central charges of a
Weyl--BMS string are given by $c_L = 54$, $c_D = -\tfrac{27}{4}$ and
$c_{TD} = -\tfrac{27}{2}$.  Even if we were to rescale $D$, we would
not find the critical values for a putative Weyl--BMS string.

\section*{Acknowledgments}
\label{sec:acknowledgments}

CB and JG would like to thank Víctor Campello for useful discussions.
JMF and GSV are grateful to Lucas Buzaglo and Sue Sierra for sharing
their expertise about the Witt algebra and their modules and in
particular for pointing out \cite{MR2320470}.  JG acknowledges the
hospitality of the Faculty of Sciences and the Institute of Exact and
Natural Sciences of the Universidad Arturo Prat, where this work was
completed. The stay was partially funded by the SIA 85220027.  We
would like to thank Ricardo Troncoso for making us aware of
\cite{Grumiller:2019fmp}, and Rudranil Basu for pointing out Appendix
A of \cite{Banerjee:2020qjj} to us.

The work of CB is partially supported by Project MAFALDA
(PID2021-126001OBC31, funded by MCIN/ AEI /10.13039/50110001 1033 and
by ``ERDF A way of making Europe''), Project MASHED
(TED2021-129927B-I00, funded by MCIN/AEI /10.13039/501100011033 and by
the ``European Union Next GenerationEU/PRTR''), and Project ACaPE
(20121-SGR-00376, funded by AGAUR-Generalitat de Catalunya). The
research of JG was supported in part by PID2022-136224NB-C21 and by
the State Agency for Research of the Spanish Ministry of Science and
Innovation through the Unit of Excellence María de Maeztu 2020-2023
award to the Institute of Cosmos Sciences (CEX2019- 000918-M).
GSV is supported by a studentship from the UK Science and Technologies
Facilities Council [grant number 2615874].

\appendix

\section{Casimir eigenvalue corresponding to Poincaré}
\label{AppPoincare}

The reason why the solutions to (\ref{eigen2}) provide representations
of the Lorentz algebra is the following.\cite{Delmastro:2017erq}
Assume that the $\{\omega_\ell\}$ provide a complete set of solutions
to
\begin{equation}
  -C_2\omega_\ell = \alpha \omega_\ell,
  \label{eigen3}
\end{equation}
where, in general dimension, $\ell$ is a multi-index. Acting with a
Lorentz generator $M_{\mu\nu}$ on (\ref{eigen3}) one has
\begin{equation}
  -M_{\mu\nu} C_2\omega_\ell = \alpha M_{\mu\nu}\omega_\ell,
\end{equation}
and, because $M_{\mu\nu}$ and $C_2$ commute,
\begin{equation}
  -C_2 M_{\mu\nu}\omega_\ell = \alpha M_{\mu\nu}\omega_\ell,
\end{equation}
which means that $M_{\mu\nu}\omega_\ell$ is also a solution of
(\ref{eigen3}). Now, since the $\{\omega_\ell\}$ are all the solutions
to the eigenvalue equation (\ref{eigen2}), one has necessarily that
\begin{equation}
  M_{\mu\nu}\omega_\ell = \sum_{\ell'} a_{\ell;\mu\nu}^{\ell'} \omega_{\ell'}, 
\end{equation}
which indicate that indeed the $\omega_\ell$ provide a representation
of the $M_{\mu\nu}$ via the matrices $K_{\mu\nu}$ with elements
\begin{equation*}
  (K_{\mu\nu})_\ell^{\ell'}=a_{\ell;\mu\nu}^{\ell'},  
\end{equation*}
for each $\mu, \nu$.

That the eigenvalue $-\alpha=d-1$ in (\ref{Casim1}) corresponds to an algebra which contains   Poincaré can be proved in general in any dimension by considering the algebra commutator
\begin{equation}
  [M_{\mu\nu},P_\rho] = i (\eta_{\rho\nu}P_\mu-\eta_{\rho\mu}P_\nu),
\end{equation}
which yields\footnote{For the interpretation that follows, it is
  important to obtain an expression with the $M$ to the right of the
  $P$. One can also get $[M^{\mu\nu}M_{\mu\nu},P_\rho] = 4i
  M_{\mu\rho}P^\mu - 2(d-1)P_\rho $.}
\begin{equation}
  [M^{\mu\nu}M_{\mu\nu},P_\rho] = 4iP^\mu M_{\mu\rho} + 2(d-1)P_\rho,
\end{equation}
and hence, if $C_2=1/2M^{\mu\nu}M_{\mu\nu}$,
\begin{equation}
  [C_2,P_\mu] = 2i P^\nu M_{\nu\mu} + (d-1)P_\mu.
  \label{cCP}
\end{equation}
If we represent  the generators in terms of differential operators
(not necessarily acting on the mass-shell hyperboloid), $C_2$ is a
second-order operator and the first term on the right-hand side of
(\ref{cCP}) will be a pure first-order one, without a zeroth-order
contribution. From this it can be read that
\begin{equation}
  \hat{C}_2 P_\mu = (d-1) P_\mu,
\end{equation}
as stated.

\section{Algebra of conserved charges}
\label{AppCharges}

Consider two conserved charges, computed at $t=0$, in terms of the
corresponding differential operators acting on the Fourier
coefficients of the scalar field,
\begin{equation}
  \begin{split}
    P &= \int\dk \bar{a}(\vk)\hat{P}a(\vk),\\
    Q &= \int\dk \bar{a}(\vk)\hat{Q}a(\vk).
  \end{split}
\end{equation}
Using the Poisson brackets of the Fourier modes, it can be shown,
without resorting to any integration by parts, and hence without
having to consider boundary contributions, that
\begin{equation}
  \{P,Q\} = -i \int\dk \bar{a}(\vk)[\hat{P},\hat{Q}]a(\vk),
\end{equation} 
which shows that if the algebra of the differential operators does not
exhibit any central extension, neither does the Poisson algebra of the
charges.

\bibliographystyle{utphys}
\bibliography{BMSalgs}

\end{document}